\algnewcommand\algorithmicforeach{\textbf{for each}}
\newcommand{\supportlogical}{\mathfrak{L}}
\newcommand{\fv}{\mathfrak{v}}
\renewcommand{\norm}[1]{\left \| #1 \right \|}
\begin{document}

\title{Fast surgery for quantum LDPC codes}

\author{Nou\'edyn Baspin}
\affiliation{Iceberg Quantum, Sydney}

\author{Lucas Berent}
\affiliation{Iceberg Quantum, Sydney}

\author{Lawrence Z.\ Cohen}
\affiliation{Iceberg Quantum, Sydney}

\author{{\scriptsize\texttt{\{nouedyn,lucas,larry\}@iceberg-quantum.com}}}

\begin{abstract}
    Quantum LDPC codes promise significant reductions in physical qubit overhead compared with topological codes. 
    However, many existing constructions for performing logical operations come with distance-dependent temporal overheads. 
    We introduce a scheme for performing generalized surgery on quantum LDPC codes using a constant number of rounds of syndrome measurement. 
    The merged code in our scheme is constructed by taking the total complex of the base code and a suitably chosen homomorphic chain complex. 
    We demonstrate the applicability of our scheme on an example multi-cycle code and assess the performance under a phenomenological noise model, showing that fast surgery performs comparably to standard generalized surgery with multiple rounds. 
    Our results pave the way towards fault-tolerant quantum computing with LDPC codes with both low spatial and temporal overheads.
\end{abstract}

\maketitle

\section{Introduction}
Quantum error correction will be an essential component of any utility-scale quantum computer. 
This requirement has led to daunting numbers for the space and time required to perform useful quantum algorithms. 
For example, the predominant approach towards fault-tolerance has focused on two-dimensional topological codes~\cite{Dennis2002}, such as the surface code. 
Fault-tolerant architectures based on surface codes have been heavily optimized over the past two decades~\cite{Litinski2019, gidney2025factor2048bitrsa}. Nevertheless, the number of physical qubits needed to perform useful algorithms is orders of magnitude greater than what is available today. 
In recent years significant attention has been applied to rectify this problem by considering quantum LDPC codes~\cite{Breuckmann2021}. 
These codes utilize non-local connectivity to significantly reduce the physical overhead required for quantum error correction. 
Such non-locality is a feature of many of the most promising platforms~\cite{Yoneda2021, Malinowski2023, Bluvstein2022, bombin2021}. 
It is now known that `good' quantum LDPC codes, such that both the number of encoded information and the distance increase linearly with the size of the code, exist~\cite{Panteleev2022, Leverrier2022}. Furthermore, several instances of small QLDPC codes whose physical overheads significantly outperform topological codes for the same logical failure rates have been constructed~\cite{Bravyi2024}. 
Thus, quantum LDPC codes are a promising route not just for future large scale systems, but also for early fault-tolerant systems.

To execute fault-tolerant algorithms it is not enough to just encode the logical information in error correcting codes. 
We also require methods for reliably performing logical gates on the encoded information. 
Initial research into quantum LDPC codes focused on their performance as memories~\cite{Tremblay2022, Grospellier2021}, as it was unclear how to perform universal processing directly on the information. 
Instead, information was `moved' into other codes, such as the surface code, for performing the bulk of the processing~\cite{Breuckmann2017}. 
However, recently there has been a significant amount of work on performing logical operations directly on the encoded information in quantum LDPC codes. 
This approach has the advantage of not requiring topological codes and thus maintaining the small physical overheads that make quantum LDPC codes advantageous. 
Various methods have been developed to perform logical gates directly in LDPC codes. 
These include methods that generalize surface code lattice surgery~\cite{Cohen2022, Williamson2024, Ide2025}, as well as methods that utilize various combinations of transversal gates and code automorphisms~\cite{Quintavalle2023, Huang2023, Xu2025}.

While quantum LDPC codes have been shown to be exceptionally promising for reducing the physical overhead of fault-tolerant quantum computing, many schemes for performing logical gates still exhibit large time overheads, similar to those encountered in schemes based on topological codes~\cite{Horsman2012}.
In particular, schemes utilizing generalized surgery~\cite{Cohen2022} incur a time overhead that grows linearly with the distance of the code.
For hardware platforms with slower clock cycles, such as trapped ions~\cite{Cai2023} and neutral atoms~\cite{Wintersperger2023}, this time overhead can prove prohibitive to executing large scale fault-tolerant algorithms.
It is hence imperative to develop schemes for performing fault-tolerant gates quickly, ideally in a time that is constant regardless of the size of the system.
Recently there have been several directions taken to address this problem, including utilizing algorithmic fault-tolerance in conjunction with transversal gates to increase the logical clock speed of surface code computing~\cite{zhou2025}.
Such ideas have also been explored in the context of quantum LDPC codes~\cite{malcolm2025}.
One drawback of these approaches however is that they require using specific code families.
This is in contrast to generalized surgery, which allows for the execution of Clifford measurements on arbitrary error correcting codes~\cite{he2025}.

In this paper we describe a scheme for performing fast generalized surgery on quantum LDPC codes.
We prove that the fault distance of our scheme is lower bounded by the code distance, even when only a constant number (one) of rounds of syndrome extraction are performed.
Our work builds on prior work on generalized surgery~\cite{cowtan2025parallel}, and on previous schemes that utilize techniques from homological algebra~\cite{Huang2023}. Some elements of our framework also relate to mapping cones surgery \cite{Ide2025}. Our results allow us to generalize schemes for performing lattice surgery on codes such as the 3D surface code~\cite{Hillmann2024}.

The paper is laid out as follows.
In Section~\ref{sec:prelims} we give a brief overview of the necessary homological algebra concepts, and discuss prior schemes of generalized surgery and other related work.
In Section~\ref{sec:surgery} we give our main theoretical construction and prove properties about the fault-distance of the resulting procedure given certain assumptions.
In Section~\ref{sec:constructing} we give an explicit, though not necessarily optimal, method for constructing schemes that satisfy the assumptions of Section~\ref{sec:surgery}.
Finally, in Section~\ref{sec:case_study} we work through an example of our scheme on the recently introduced Abelian multi-cycle codes.

\section{Preliminaries} \label{sec:prelims}

\subsection{Chain complexes and homology}

Throughout this paper  we use the toolkit of homological algebra, and so we begin with a brief overview on $\mathbb{F}_2$-homology. 
A $D$-dimensional chain complex, $(C_\bullet, \partial_\bullet)$, over $\mathbb{F}_2$ is a collection of $\mathbb{F}_2$-vector spaces, $\left\{ C_i \right\}_{i=0}^{D}$, and linear boundary maps $\left\{ \partial_i: C_{i+1} \to C_i \right\}_{i=1}^{D}$
\begin{equation}
    C_\bullet =
    \begin{tikzcd}
        C_D \arrow[r, "\partial_D"] & C_{D-1} \arrow[r, "\partial_{D-1}"] & \cdots \arrow[r, "\partial_2"] & C_1 \arrow[r, "\partial_1"] & C_0
    \end{tikzcd}\, ,
\end{equation}
such that the boundary maps satisfy $\partial_i \circ \partial_{i-1} = 0$. 
We assume that each vector space $C_i$ comes with a distinguished basis and its elements are called $i$-chains.
Moreover, the elements of $B_i = \im \partial_{i+1}$ are called $i$-boundaries, and the elements of $Z_i = \ker \partial_i$ are called $i$-cycles. 
The homology groups are defined as $H_i\left(C_\bullet\right)$, i.e., they correspond to cycles that are not boundaries
\begin{equation}
    H_i\left(C_\bullet\right) = \ker \partial_i / \im \partial_{i+1}\, .
\end{equation}
We can similarly define the cohomology groups, which for chain complexes over $\mathbb{F}_2$-vector spaces are defined as
\begin{equation}
    H^i\left(C^{\bullet}\right) = \ker \delta_i / \im \delta_{i-1}\, ,
\end{equation}
where $\delta_i = \partial_{i+1}^T$ for the considered $\mathbb{F}_2$ vector spaces. 
Since the vector spaces are equipped with distinguished bases, we define a norm $\left \| \cdot \right \|$ on the $i$-chains given by the Hamming weight. 
Then the $i$-systole and $i$-cosystole are defined by
\begin{align}
    \text{Syst}_i\left(C_{\bullet}\right) &= \min \left\{ \left\| v \right\| \mid v \in Z_i \setminus B_i \right\}, \\
    \text{Syst}^i\left(C^{\bullet}\right) &= \min \left\{ \left\| v \right\| \mid v \in Z^i \setminus B^i\right\}.
\end{align}
\paragraph*{Remark:} In the conventional literature, both $H_i$ and $H^i$ are often treated as abstract groups. 
In the context of QEC codes however, their representation on $C_i$ is of singular importance, as it carries important data regarding the geometry of the logical operators. 
In the rest of this work, we will slightly misuse notation and identify any (co)homology group $H$ with a basis $\{\vec{u}\}_u \subset Z_i$ that is isomorphic to $H$.

\subsubsection*{Chain maps and total complex}
There exists a natural notion of a morphism for chain complexes. 
Suppose we have two chain complexes, $\left(C_\bullet, \partial_C\right)$ and $\left(D_\bullet, \partial_D\right)$, along with a set of homomorphisms $\gamma_i: D_i \to C_i$. 
If the homomorphisms and the boundary maps satisfy the commutativity condition
\begin{equation}
    \gamma \partial = \partial \gamma \, ,
\end{equation}
then the map $\gamma_i$ describes a chain map between the complexes $C_\bullet$ and $D_\bullet$. 
We can represent this visually via the following commuting diagram
\begin{equation}
    \begin{tikzcd}
        \cdots \arrow[r] & C_{i+1} \arrow[r] & C_{i} \arrow[r] & C_{i-1} \arrow[r] & \cdots \\
        \cdots \arrow[r] & D_{i+1} \arrow[u, "\gamma_{i+1}"] \arrow[r] & D_{i} \arrow[u, "\gamma_{i}"] \arrow[r] & D_{i-1} \arrow[u, "\gamma_{i-1}"] \arrow[r] & \cdots
    \end{tikzcd} \, .
\end{equation}
We can define a new chain complex from the above complex, which is referred to as the total complex and is given by
\begin{equation}
    \tot E_\ell = \bigoplus_{i+j = \ell} E_{i,j}, \quad \partial^E = \sum_{i+j = \ell} \partial^h_i + \partial^v_j \,
\end{equation}
where the horizontal components $\partial^h$ correspond to the boundary maps $\partial_C, \partial_D$, and the vertical components $\partial^v$ to the maps $\gamma_j$.

\subsubsection*{Systolic expansion}
There is a notion of expansion in chain complexes which generalizes the notion of edge expansion in graphs, which can be viewed as $1$-dimensional chain complexes.
\begin{definition} \label{def:systolic_expansion}
    Let $C_\bullet$ be a $D$-dimensional complex and let $\epsilon > 0$. 
    For $i < D$ we say that the vector space $C_i$ is $\epsilon$-systolic expanding if
    \begin{equation}
        \min \left\{ \frac{\left\|\partial_i f\right\|}{\min_{z \in Z_i} \left\|f+z\right\|} \mid f \in C_i \setminus Z_i \right\} \geq \epsilon
    \end{equation}
\end{definition}
There are several explicit constructions of chain complexes that are also systolic expanders~\cite{Evra2016, Oppenheim2025,dinur_cubical_24}.

\subsection{CSS codes as chain complexes}

We denote the Pauli group on $n$ qubits by $\mathcal{P}_n$. 
An $\llbracket n, k, d \rrbracket$ quantum stabilizer code is a $2^k$ dimensional subspace of the $2^n$ dimensional group $\mathcal{P}_n$. 
It is described by an Abelian group $\mathcal{S} \in \mathcal{P}_n$ such that $-I \notin \mathcal{S}$. 
The group $\mathcal{S}$ are the stabilizers of the code. 
We choose a generating set $S$ of $\mathcal{S}$, which are the stabilizer generators of the code. 

An $\llbracket n, k, d \rrbracket$ quantum CSS code is a stabilizer code such that the stabilizer generators are either products of $X$-Pauli operators, or products of $Z$-Pauli operators only. 
A CSS code can thus be specified by two parity check matrices, $H_X \in \mathbb{F}_2^{m_x \times n}$ and $H_Z \in \mathbb{F}_2^{m_z \times n}$. 
The rows of $H_X$ and $H_Z$ correspond to $X$-type stabilizer generators and $Z$-type stabilizer generators, respectively.
Note that the commutativity of the stabilizers enforces the condition $H_X \cdot H_Z^T = 0$. 
Moreover, a code is called a low-density parity-check (LDPC) code if the matrices $H_X,H_Z$ have constant weight, independent of the code size $n$.

A CSS code can be equivalently described by a 2-dimensional chain complex
\begin{equation} \label{eq:css_complex}
    C_\bullet = 
    \begin{tikzcd}[cells={nodes={minimum height=2em}}]
        \mathbb{F}^{m_x}_2 \arrow[r,"H_X^T"] & \mathbb{F}^n_2 \arrow[r, "H_Z"] & \mathbb{F}^{m_z}_2
    \end{tikzcd}\, .
\end{equation}
Nilpotency of the boundary maps is equivalent to the commutativity condition on the parity check matrices. 
The logical Pauli operators are defined as the set of Pauli operators that commute with all the stabilizers but are not stabilizers themselves. 
Hence, they correspond to elements in the first homology and cohomology group
\begin{align}
    L_X &= H_1\left(C_\bullet\right) = \ker H_Z / \im H_X^T \\
    L_Z &= H^1\left(C_\bullet\right) = \ker H_X / \im H_Z^T\, .
\end{align}
The code distance is the minimum Hamming weight of an operator in $L_X \cup L_Z$. 
This can be expressed by the systole and cosystole of the chain complex
\begin{align}
    d_X &= \text{Syst}_1\left(C_\bullet\right) \\
    d_Z &= \text{Syst}^1\left(C^\bullet\right).
\end{align}
Some codes also possess meta-checks which can be used to protect against measurement errors when performing syndrome readout~\cite{Campbell2019}.
A set of $X$-type meta-checks is described by a matrix $M_X \in \mathbb{F}^{m'_x \times m_x}_2$ such that $M_X \cdot H_X = 0$. 
In this case we can describe the code by the $3$-dimensional chain complex
\begin{equation}
    C_\bullet = 
    \begin{tikzcd}[cells={nodes={minimum height=2em}}]
        \mathbb{F}^{m'_x}_2 \arrow[r, "M_X^T"] &
        \mathbb{F}^{m_x}_2 \arrow[r,"H_X^T"] & \mathbb{F}^n_2 \arrow[r, "H_Z"] & \mathbb{F}^{m_z}_2
    \end{tikzcd} \, .
\end{equation}

\subsection{Generalized surgery}
\label{subsec:generalized-surgery}
Generalized surgery is a scheme for performing fault-tolerant computation by measuring logical Pauli operators using ancillary systems.
By attaching the ancillary system to the code block the code can be deformed into a new code, similar to surface code lattice surgery.

It is possible to implement the full Clifford group on $n$ logical qubits using only logical Pauli measurements and ancilla logical qubits~\cite{Litinski2019}. 
Hence, when supplemented with magic states, generalized surgery allows us to perform universal fault-tolerant computation. 
Several schemes have been proposed for performing generalized surgery, however, they all broadly follow a similar pattern, which is encapsulated by the following definition of a lattice surgery ancilla system.
\begin{definition}[Lattice surgery ancilla system]
\label{def:surgery-patch}
    Let $L \in \cP_X^n$ be an $X$-type logical of a stabilizer code $\cC$ with support $\supportlogical \subset [n]$. 
    Then a lattice surgery ancilla system is described by matrices $A_X$, $A_Z$ and maps $\gamma_1$, $\gamma_2$ such that the deformed parity check matrices
    \begin{align}
        H'_X &= \begin{pmatrix}
            H_X & 0 \\
            \gamma_1 & A_X
        \end{pmatrix}, \\
        H'_Z &= \begin{pmatrix}
            H_Z & \gamma_1 \\
            0 & A_Z
        \end{pmatrix},
    \end{align} 
    satisfy the following requirements
    \begin{enumerate}
        \item There exists $\vec{v} \in \rowspace(H_X')$ such that $\supp(\vec{v}) = \supportlogical$;
        \item $\supp(L')\neq \supp(\vec{u})$ $\forall L' \neq L$ and all vectors $\vec{u} \in \rowspace(H'_X)$;
        \item The stabilizer generators commute, i.e., $H'_X{H'}_Z^T = 0$.
    \end{enumerate}
\end{definition}
The first requirement ensures that the logical Pauli operator we wish to measure is in the stabilizer group of the deformed code and hence its value can be inferred via syndrome readouts. 
The second requirement ensures that no other logical Pauli operators are measured during the deformation.

There are of course other properties that are desirable, for example we generally want to ensure that the distance of the deformed code is sufficiently large, and that the deformed code remains LDPC. 
These properties are obtained via careful construction of $A_X$, $A_Z$, $\gamma_1$, and $\gamma_2$.

Suppose we have a CSS code $\mathcal{C}$, described by the chain complex $C_\bullet$. Let $\overline{X}_L$ be an $X$-type logical operator with support on the physical qubits $L \subseteq C_1$. 
Suppose we can construct a $2$-dimensional chain complex, $D_\bullet$, and a chain map $\gamma_\bullet$ such that (i) the following diagram commutes
\begin{equation}
    \begin{tikzcd}
        & \arrow[d, "\gamma_1"] D_1 \arrow[r, "A_X^T"] & \arrow[d, "\gamma_0"] D_0 \arrow[r, "A_Z"] & D_{-1} \\
        C_2 \arrow[r, "H_X^T"] & C_1 \arrow[r, "H_Z"] & C_0 &
    \end{tikzcd},
\end{equation}
and (ii) we have that $\im \gamma_1 = L$. 
Then the code obtained by taking the total complex includes $L$ in the stabilizer group and hence performs a measurement of $L$. 
Furthermore, if the chain complex $D_\bullet$ is $\varepsilon$-systolic expanding with $\varepsilon \geq 1$, this procedure is distance preserving~\cite{Ide2025,Williamson2024}.

\subsection{Surgery distance}

Suppose we have a CSS code and we want to measure an element $\vec{s} \in  C_2$ such that $ H_X^T \vec{s} = O$, for some operator $O \in C_1$. 
From a physical perspective, the result of measuring the syndrome $C_2$ is a vector $\vec{v}_{\text{syndrome}} \in C_2$, and the measured value of $\vec{s}$ is thus $s_\text{observed} =  \vec{s} \cdot \vec{v}_{\text{syndrome}}  \in \F_2$, where $\vec{v}\cdot \vec{w}$ denotes the dot product of binary vectors. 
In case $\vec{v}_{\text{syndrome}}$ is affected by a measurement error $\vec{e} \in C_2$, the actual measured value is $s_\text{observed}' = s_\text{observed} + \vec{s} \cdot \vec{e} $. The observed value is thus affected if and only if $\vec{s} \cdot \vec{e}\neq 0$ \footnote{Since we work on $\F_2$ we will express this anti-commutation relation as simply $\vec{s} \cdot \vec{e}=1$.}. This error is undetectable if and only if $\vec{v}_{\text{syndrome}} +\vec{e} \in \im H_X$, or, equivalently $\vec{e} \in \im H_X$. This observation motivates the following definition.
\begin{definition}[Surgery distance]
\label{def:surgery-distance}
    The surgery distance for the measurement of an element $\vec{s} \in C_2$ is defined as
    \[
    d_{s} = \min_{\vec{e} \in \im H_X, \vec{s} \cdot \vec{e} = 1} \norm{\vec{e}}
    \]
\end{definition}
\begin{proposition}
    \label{prop:surgery-distance}
     The smallest undetectable syndrome error obeys $\norm{\vec{e}} \geq d_s$. 
\end{proposition}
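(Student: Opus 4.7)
The proof plan is to observe that Proposition~\ref{prop:surgery-distance} is almost a direct unpacking of Definition~\ref{def:surgery-distance}, relying on the two observations recorded in the paragraph immediately above that definition. I would not introduce any new machinery; the entire argument is a matter of matching conditions.

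First, I would restate the two characterizations already developed. A measurement error $\vec{e} \in C_2$ is \emph{undetectable} precisely when the observed syndrome $\vec{v}_{\text{syndrome}} + \vec{e}$ remains consistent with the data-error model, i.e., lies in $\im H_X$; since the true syndrome $\vec{v}_{\text{syndrome}}$ itself lies in $\im H_X$, this is equivalent to $\vec{e} \in \im H_X$. Independently, the measured value of $\vec{s} \in C_2$ is corrupted by $\vec{e}$ precisely when $\vec{s}\cdot\vec{e} = 1$, since $s_{\text{observed}}' = s_{\text{observed}} + \vec{s}\cdot\vec{e}$.

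Combining these two observations, any syndrome error that is simultaneously undetectable and actually alters the measurement outcome must satisfy both $\vec{e} \in \im H_X$ and $\vec{s}\cdot\vec{e} = 1$. Such an $\vec{e}$ therefore lies in the feasible set of the minimization defining $d_s$ in Definition~\ref{def:surgery-distance}, so its Hamming weight is at least $d_s$, as claimed.

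There is essentially no technical obstacle; the one point worth flagging is interpretive. The phrase \emph{smallest undetectable syndrome error} should be read as \emph{smallest undetectable syndrome error that actually corrupts the measurement of $\vec{s}$}, because an undetectable error with $\vec{s}\cdot\vec{e} = 0$ is harmless to the logical outcome and can otherwise have arbitrarily small weight (e.g., $\vec{e} = 0$). Once this reading is fixed, the proof is a one-line consequence of the definition.
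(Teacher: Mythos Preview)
Your proposal is correct and matches the paper's approach exactly: the paper's one-line proof is just the contrapositive of your argument, noting that if $\norm{\vec{e}} < d_s$ then either $\vec{s}\cdot\vec{e}=0$ or $\vec{e}\notin\im H_X$. Your interpretive remark about reading ``undetectable'' as ``undetectable and outcome-corrupting'' is also the intended reading.
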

\begin{proof}
    If $\norm{\vec{e}} < d_s$, then either it does not affect $\vec{s}$, or it can easily be detected since $\vec{e} \not \in \im H_X$.
\end{proof}
The surgery distance is often approximated or estimated by using the meta-check distance~\cite{Hillmann2024}. 
Suppose the code has a meta-check matrix $M_X$, then $d_s$ is lower bounded by $d_M$, the norm of the smallest element not in $\ker M_X$. However this approach often has the limitation that for LDPC codes $d_M \in \mathcal{O}(1)$. 

\subsection{Related work}
There has been an extensive line of work on schemes for fault-tolerant logical Pauli measurements with QLDPC codes since its initiation in Ref.~\cite{Cohen2022}.
We present a brief summary of works that apply similar techniques as used in this manuscript and
for a more complete and detailed overview refer to Section 3.2 of Ref.~\cite{he2025}.

A scheme that improves upon the qubit overhead of the initial protocol from Ref.~\cite{Cohen2022} has been proposed recently for CSS codes in Ref.~\cite{Ide2025}.
There, the ancilla complex is equivalent to Def.~\ref{def:surgery-patch} and is constructed using the cone complex of the complexes corresponding to the initial code and an auxiliary classical code defined by a graph.
A similar protocol that is slightly more generally applicable has been presented in Ref.~\cite{Williamson2024}.
In both works the authors show that to measure a certain logical Pauli operator, an auxiliary graph that fulfils certain properties can be constructed and used to deform the initial code.
The value of the logical operator can then be inferred by a standard gauge fixing procedure on the deformed code.
To maintain the distance of the deformed code, both constructions rely on edge expansion of the auxiliary graph, as opposed to constructing a hypergraph product code with high enough distance as in Ref.~\cite{Cohen2022}.

Moreover, in Ref.~\cite{he2025} the authors present a construction that builds a single ancilla system that can be used to measure any chosen logical operator of the code by varying how the ancilla system is attached to the code block.
Careful construction of the ancilla system and ensuring it has high enough expansion properties allow them to prove the distance and (phenomenological) fault-tolerance of the protocol.

Similarly to this work, to measure arbitrary single and multi-qubit operators of a CSS code, the homomorphic measurement framework~\cite{Huang2023} constructs an ancilla QLDPC code together with a morphism to the initial code block to perform single-shot logical measurements in the vein of Steane error correction.

Finally, methods building on~\cite{Ide2025,Williamson2024,zhang2025time} to separate overlapping logical operators have been developed, enabling parallel measurements for general QLDPC codes~\cite{cowtan2025parallel}.

\section{Fast surgery} \label{sec:surgery}

As previously discussed in Section \ref{subsec:generalized-surgery}, the procedure of surgery on a code can naturally be formalised through homological algebra. In what follows, we will assume we are given $C_\bullet$, a code complex that we wish to perform surgery on; along with a second code complex $D_\bullet$, and a chain homomorphism $\gamma_\bullet$ obeying $\gamma_i : D_i \rightarrow C_i$. 
We formally lay out the surgery procedure in Algorithm \ref{alg:single-shot-surgery}. Subsequently, Subsection \ref{subsec:correctness} is dedicated to proving the correctness of Algorithm \ref{alg:single-shot-surgery}; while Subsections \ref{subsec:distance} and \ref{subsec:fault-tolerance} deal with fault-tolerance guarantees.

In the name of definiteness, we focus on the case of measuring $X$-type logical operators -- the results can easily be ported to the measurement of $Z$-type logical operators. 
The main result of this section can now be stated as follows.

\begin{theorem}
    \label{thm:main-theorem}
    Let $C_\bullet, D_\bullet$ be chain complexes, and $\gamma_\bullet$ a chain homomorphism from $D_\bullet$ to $C_\bullet$. Assume these objects satisfy the following:
    \begin{enumerate}
        \item
        ${\partial_1^D}, {\partial_2^C}, {\partial_1^C}, \gamma_1$ and their transpose are $\omega$-bounded, for some $\omega$
        \item  $D_1$ is $\epsilon$-systolic expanding, for some $\epsilon \geq \omega$
    \end{enumerate}
    Then the surgery procedure given by Algorithm \ref{alg:single-shot-surgery} measures the logical subspace $\gamma_1\left( H_1\left( D_\bullet \right) \right)$, and has fault distance at least $d_Z(D)/\omega$.
\end{theorem}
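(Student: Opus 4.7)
The plan is to split the proof into a correctness argument and a fault-distance bound, with the systolic expansion of $D_1$ serving as the central analytic tool that converts small faults on the ancilla block into honest cycles.

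First, for correctness, I would verify that the deformed code obtained from the total complex satisfies Definition~\ref{def:surgery-patch} with the measured logicals being exactly $\gamma_1(H_1(D_\bullet))$. Because $\gamma_\bullet$ is a chain map, any cycle $\ell\in Z_1(D_\bullet)$ yields a row combination of $H'_X$ whose restriction to $C_1$ equals $\gamma_1(\ell)$: the chain-map relation $\gamma_0\partial_1^D=\partial_1^C\gamma_1$ ensures the $C_0$ block vanishes, so $\gamma_1(\ell)$ becomes a stabilizer of the merged code. Conversely, any row combination whose projection to $C_1$ is a logical of $\mathcal{C}$ must pull back to a cycle in $D_\bullet$, giving the reverse inclusion and simultaneously excluding spurious measurements in the sense of item~2 of Definition~\ref{def:surgery-patch}. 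Commutativity $H'_X {H'}_Z^{T}=0$ is again a direct consequence of $\gamma_\bullet$ being a chain map.

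Second, for the fault distance, I would consider a minimum-weight fault $\vec{f}$ with components on physical qubits $C_1\oplus D_0$ and on $X$-stabilizer outcomes $C_2\oplus D_1$. Suppose $\vec{f}$ is undetected yet flips the reconstructed value of $\gamma_1(\ell)$ for some $\ell\in H_1(D_\bullet)$. The goal is to extract from $\vec{f}$ a nontrivial $Z$-type logical $\vec{u}\in D_0$ of the $D$-code with $\|\vec{u}\|\leq \omega\|\vec{f}\|$, so that the bound $\|\vec{u}\|\geq d_Z(D)$ yields the theorem. The $D_1$-component of the measurement error must have small $D_0$-syndrome, for otherwise the ancilla checks fire; the $\epsilon$-systolic expansion of $D_1$ then lets me replace it by a genuine element of $Z_1(D_\bullet)$ at additive cost controlled by $1/\epsilon\leq 1/\omega$. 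Pushing the remaining components through $\gamma_\bullet$ and the boundary maps $\partial_1^C,\partial_2^C,\partial_1^D$, each of weight at most $\omega$, produces the candidate $\vec{u}$; the fact that $\vec{f}$ flipped the logical value is precisely the statement that $\vec{u}$ represents a nonzero class in $H^0(D^\bullet)$.

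The main obstacle I expect is the bookkeeping in this second step: I must combine four error components, the chain-map identities, the $\omega$-boundedness of $\partial_1^D,\partial_2^C,\partial_1^C,\gamma_1$ and their transposes, and the systolic expansion of $D_1$ into the single weight bound $\|\vec{u}\|\leq \omega\|\vec{f}\|$, while also verifying cleanly that $\vec{u}$ is a cocycle and that it is cohomologically nontrivial. The hypothesis $\epsilon\geq\omega$ seems exactly calibrated to prevent constant-factor losses when the systolic-expansion cleanup is composed with $\omega$-bounded pushforwards, and confirming that this balance holds throughout is the delicate part of the argument.
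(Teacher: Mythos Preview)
Your correctness sketch is essentially right and matches the paper's approach in spirit: the paper also verifies that the logicals of $C_\bullet$ becoming stabilizers of $\tot(E)_\bullet$ are exactly $\gamma_1(H_1(D_\bullet))$, though it first computes $\dim H_1(\tot E)=k_C-\dim\gamma_1(H_1(D_\bullet))$ via a double-complex homology argument (Lemma~\ref{lemma:dimension}) and then exhibits a basis (Lemma~\ref{lem:homology-basis}), rather than checking Definition~\ref{def:surgery-patch} directly.

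The fault-distance argument, however, rests on a misconception. Systolic expansion of $D_1$ plays \emph{no role} in the paper's fault-distance bound; the hypothesis $\epsilon\geq\omega$ is used only to establish $d_X(\tot E)\geq d_X(C)$ (Lemma~\ref{lemma:dx-distance}), a property of the merged code that is not part of this theorem's conclusion. The actual argument (Theorem~\ref{thm:fault-distance-algorithm}) is simpler than you anticipate: push each qubit error onto the syndrome via the \emph{transpose} maps, obtaining an effective syndrome error $e'_{D_1}=e_{D_1}+(\partial_1^D)^{T}e_{D_0}+\gamma_1^{T}e_{C_1}\in D_1$ of weight at most $\omega\lVert\vec f\rVert$ by $\omega$-boundedness alone, while the $C_2$ component is eliminated by the assumption of perfect pre- and post-surgery readout on $C_\bullet$. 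Undetectability forces $e'_{D_1}\in\ker(\partial_2^D)^{T}$, and flipping the outcome of some $s\in H_1(D_\bullet)$ forces $e'_{D_1}\notin\im(\partial_1^D)^{T}$; hence $e'_{D_1}$ is already a nontrivial element of $H^1(D^\bullet)$ and $\lVert e'_{D_1}\rVert\geq d_Z(D)$ by Lemma~\ref{lemma:lower-bound-surgery-distance}. Your proposed step---``the $D_1$-component has small $D_0$-syndrome, so systolic expansion cleans it into $Z_1(D_\bullet)$''---does not correspond to any detectability constraint (nothing forces $\lVert\partial_1^D e_{D_1}\rVert$ to be small), and the witness you need lives in $D_1$ and represents a class in $H^1(D^\bullet)$, not in $D_0$ or $H^0$ as you wrote.
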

\begin{proof}
    Corollary of Theorems \ref{thm:surgery-algorithm} and \ref{thm:fault-distance-algorithm}.
\end{proof}

Let $\{e_1, \dots, e_{|D_0|}\}$ denote the standard computational basis for $D_0$.
These vectors can be mapped to Pauli operators in $\cP^{\otimes |D_0|}$ under the standard identification $\mu_X(e_i) = X_i$.
Then, for any vector $u = \sum_i u_i \cdot e_i \in D_0$, this mapping extends linearly, i.e. $\mu_X(u) = \prod_{i: u_i = 1} X_i$. Similarly, any vector $c \in C_1$ can be sent to $\cP^{\otimes |C_1|}$, and we define $\mu_X(u \oplus c) = \mu_X(u) \otimes \mu_X(c) \in \cP^{\otimes |D_0|} \otimes \cP^{\otimes |C_1|}$. 
Finally, we will also denote $\{v_1, \dots, v_{|D_1|}\}$ the standard computational basis for $D_1$.

With this notation, we can arbitrarily pick a pre-image of $\gamma_1 (H_1(D_{\bullet}))$; this step is essential due to the surgery measuring \emph{multiple} operators at the same time.
Write $\{h_l\}_{l=1}^{k_E}$ a selected basis of $\gamma_1 (H_1(D_{\bullet}))$, then for every $h_l$ it is possible to find $\fv_l \subset \{v_1, \dots, v_{|D_1|}\}$ such that $\sum_v \gamma_1(v) = h_l$.
The surgery protocol is then described by Algorithm \ref{alg:single-shot-surgery}.
\begin{algorithm}[H]
    \caption{Fast surgery}
    \label{alg:single-shot-surgery}
    \begin{algorithmic}
        \Require Chain complexes $C_\bullet, D_\bullet$, a chain homomorphism $\gamma_\bullet: D_\bullet \rightarrow C_\bullet$, an input code state $\ket{\Psi}$ of $C_\bullet$, and a selected basis $\{h_l\}_{l=1}^{k_E}$ of $\gamma_1 (H_1(D_{\bullet}))$
        \Ensure The result $\{\sigma_l\}_l$ of measuring $\{h_l\}_{l=1}^{k_E}$; and the post-measurement code state $\prod_l(\idty{} + \sigma_l \cdot \mu_X(h_l))\ket{\Psi}$
        \State $\{\sigma_{l}\}_{l=1}^{k_E} \gets \{1\}$
        \State $\{\omega_{e_i}\}_{i=1}^{i=|D_0|} \gets \{1\}$
        \State $\ket{\Psi} \gets \ket{\Psi}\otimes \ket{0}^{\otimes |D_0|}$ 
        \Comment{Initialize the ancillary register}
        \ForEach{$ v \in \{v_1, \dots, v_{|D_1|}\}$}
            \State $A_v \gets \mu_X((\partial^D_1 + \gamma_1 )v)$
        	\State Measure $A_v$ on $\ket{\Psi}$
            \State $\varepsilon_v \gets $ Measurement result 
            \Comment{Measurement result is $\pm 1$}
            \State $\ket{\Psi} \gets \frac{1}{2}(\idty{}+\varepsilon_v A_v)\ket{\Psi}$
            \Comment{Post-measurement state}
            \ForEach{$ l \in [1, \dots , k_E]$}
                \If{$v \in \fv_l$}
                    \State $\sigma_l \gets \varepsilon_v \cdot \sigma_l$
                \EndIf
            \EndFor
        \EndFor
        \ForEach{$ e \in \{e_1, \dots, e_{|D_0|}\}$}
    	\State Measure $Z_e$ on $\ket{\Psi}$
            \State $\omega _e \gets $ Measurement result
            \Comment{Measurement result is $\pm 1$}
            \State $\ket{\Psi} \gets \frac{1}{2}(\idty{}+\omega_e Z_e)\ket{\Psi}$
            \Comment{Post-measurement state}
        \EndFor
        \State Pick $v' \in D_1$ such that $\partial^D_1v' = (\omega_{e_i})_i$ \Comment{By construction $(\omega_{e_i})_i \in \im \partial^D_1$ }
        \State $\ket{\Psi} \gets \mu_X((\partial^D_1 + \gamma_1 )v')\ket{\Psi}$
        \State Discard ancillary register
    \end{algorithmic}
\end{algorithm}

\subsection{Proof of correctness}
\label{subsec:correctness}

For the procedure given in Algorithm \ref{alg:single-shot-surgery} to be correct, we will want to ensure that we are exactly measuring the subspace $\gamma_1\left( H_1\left( D_\bullet \right) \right)$, which we demonstrate here.
Remember that lattice surgery is, fundamentally, a code deformation process $C_\bullet \rightsquigarrow C_\bullet' \rightsquigarrow C_\bullet$, where the code $C_\bullet'$ happens to contain the logicals of interest as stabilizers.
As such, understanding the properties of $C_\bullet'$ will take a critical place in our argument.

In the present case, $C_\bullet'$ will be uniquely defined by the combination of $C_\bullet, D_\bullet$ and $\gamma_\bullet$.
We begin by a remark that will save us some cumbersome bookkeeping: there always exists $(D_{-1}, \partial^D_{0})$ such that $\ker(\partial^D_{0}) / \im(H'_Z) = \idty$.
Specifically we set $D_{-1} = D_0 / \im\left(H'_Z\right)$ and $\partial_0^D$ to the quotient map from $D_0$ to $D_{-1}$.
With this new module, we obtain the following commuting diagram:
\begin{equation} \label{eq:hom_map2}
E_\bullet = 
\begin{tikzcd}[cells={nodes={minimum height=2em}}]
    D_2 \arrow[r,"{H'}^T_X"] \arrow[d, "\gamma_2"] &  D_1 \arrow[r,"{H'}_Z"] \arrow[d,"\gamma_1"] &  D_0 \arrow[d,"\gamma_0"] \arrow[r,"\partial^D_{0}"] & D_{-1} \\
    C_2 \arrow[r,"H_X^T"] & C_1 \arrow[r, "H_Z"] & C_0
\end{tikzcd}
\end{equation}
Throughout this section we will find it practical to also label this complex as the following double complex (indeed, observe that $\partial^h \partial^v = \partial^v \partial^h$):
\begin{equation} \label{eq:double-complex}
\begin{tikzcd}[cells={nodes={minimum height=2em}}]
    E_{2,1} \arrow[r,"\partial^h_{2,1}"] \arrow[d, "\partial^v_{2,1}"] &  E_{1,1} \arrow[r,"\partial^h_{1,1}"] \arrow[d,"\partial^v_{1,1}"] &  E_{0,1} \arrow[d,"\partial^v_{0,1}"] \arrow[r,"\partial^h_{0,1}"] & E_{-1,1} \\
    E_{2,0} \arrow[r,"\partial^h_{2,0}"] & E_{1,0} \arrow[r, "\partial^h_{1,0}"] & E_{0,0}
\end{tikzcd}
\end{equation}
This dual complex yields a new chain complex given by the total complex $C_\bullet' = \tot (E)_\bullet$ \cite{breuckmann2021balanced}, which is a new, \emph{deformed} quantum code. 
Elements of the code $D_\bullet$ are shifted, so that qubits in $D_\bullet$ are associated with $X$-type checks in $\tot (E)_\bullet$, $Z$-type checks in $D_\bullet$ are associated with qubits in $\tot (E)_\bullet$, and $Z$-type meta-checks in $D_\bullet$ are associated with $Z$-type checks in $\tot (E)_\bullet$. 
In particular, $X$-type checks in $D_\bullet$ will be associated with $X$-type meta-checks in $\tot (E)_\bullet$, which will make the logical measurement being performed robust to syndrome errors.
\begin{equation}
\tot (E)_\bullet = 
\begin{tikzcd}[ampersand replacement=\&]
    D_2 \arrow{r}{\begin{pmatrix} {H'}^T_X \\ \gamma_2 \end{pmatrix}}
    \& D_1 \oplus C_2 \arrow{r}{\begin{pmatrix} H'_Z & 0 \\ \gamma_1 & H_X^T \end{pmatrix}}
    \& D_0 \oplus C_1 \arrow{r}{\begin{pmatrix} \gamma_0 & H_Z \end{pmatrix}}
    \& C_0
\end{tikzcd}
\end{equation}
\paragraph*{Remark:} The construction can be straightforwardly extended to the case when the code $C_\bullet$ has $X$-type and $Z$-type meta-checks, in which case it is described by a $4$-dimensional chain complex.
The only additional work is to construct a suitable $\gamma_{-1}: D_{-1} \to C_{-1}$. In fact $\gamma_{-1}$ is uniquely given by the map $\partial_0^C \circ \gamma_0 \circ \left(\partial_0^D\right)^{-1}$, where $\left(\partial_0^D\right)^{-1}$ is the right inverse of $\partial_0^D$.
Such an inverse exists because we defined $\partial_0^D$ such that it has full row rank. Then $\tot (E)_\bullet$ is constructed from the total complex of the resulting dual complex, as above.

In the code $\tot (E)_\bullet$, the logical operators from $C_\bullet$ that are in the image of $\gamma_1$ become stabilizers -- and can thus be readily measured. We will make this statement formal by obtaining an explicit basis for the set of logical $H_1(\tot E)$. As a first step we compute the exact number of logicals in $H_1(\tot E)$ \footnote{It can be noted that if $\gamma_1$ is full rank on $H_1(D_\bullet)$, then Lemma \ref{lemma:dimension} gives the naturally expected $k_E = k_C - k_D $.}.

\begin{lemma} \label{lemma:dimension}
    \[
    k_E \equiv \dim H_1(\tot E) = k_C - \dim \gamma_1 (H_1(D_{\bullet}))
    \]
\end{lemma}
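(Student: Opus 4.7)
The plan is to identify $\tot(E)_\bullet$ with the mapping cone of $\gamma_\bullet$ and extract $H_1$ from the associated long exact sequence in homology. The preparatory step I would do first is to exploit the $D_{-1}$ construction from the remark preceding Eq.~\eqref{eq:hom_map2}: setting $D_{-1}=D_0/\im(H'_Z)$ with $\partial_0^D$ equal to the quotient map forces $\ker\partial_0^D=\im(H'_Z)=\im\partial_1^D$, so that $H_0(D_\bullet)=0$ by construction. This vanishing is the key simplification that makes the right-hand side of the lemma as clean as stated.

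Next I would read off from the double complex~\eqref{eq:double-complex} the identification $\tot(E)_n=C_n\oplus D_{n-1}$, recognizing this as the mapping cone of $\gamma_\bullet:D_\bullet\to C_\bullet$. The canonical short exact sequence of complexes $0\to C_\bullet\to\tot(E)_\bullet\to D_\bullet[-1]\to 0$ then yields a long exact sequence in homology whose relevant segment is
\[
H_1(D_\bullet)\xrightarrow{\gamma_\ast}H_1(C_\bullet)\to H_1(\tot E)\to H_0(D_\bullet)\xrightarrow{\gamma_\ast}H_0(C_\bullet).
\]
Since $H_0(D_\bullet)=0$ by the previous step, this collapses to $H_1(\tot E)\cong H_1(C_\bullet)/\gamma_\ast(H_1(D_\bullet))$. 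The induced map $\gamma_\ast$ on $H_1$ is by definition the one represented by $\gamma_1$, so taking $\mathbb{F}_2$-dimensions yields $k_E=k_C-\dim\gamma_1(H_1(D_\bullet))$.

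If I wanted to avoid invoking the mapping-cone machinery, I would do the same computation by hand. Using the chain-map identity $\gamma_0 H'_Z=H_Z\gamma_1$, I would parametrize $\ker\partial_1^{\tot E}$ as the set of pairs $(H'_Z u,\gamma_1 u+z)$ with $u\in D_1$ and $z\in\ker H_Z$, and then define $\phi(H'_Z u,\gamma_1 u+z)=[z]\in H_1(C_\bullet)/\gamma_1(H_1(D_\bullet))$. Well-definedness holds because two valid choices of $u$ differ by an element of $\ker H'_Z=Z_1(D_\bullet)$, whose image under $\gamma_1$ is quotiented out; surjectivity follows by taking $u=0$; and a short verification identifies $\ker\phi$ with $\im\partial_2^{\tot E}$. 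The first isomorphism theorem then gives the same conclusion.

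The main obstacle is notational rather than conceptual. One must double-check that the boundary maps in~\eqref{eq:double-complex} assemble into a mapping cone with the correct indexing and (characteristic-2) signs, and that the homology-level map $\gamma_\ast$ has image exactly what the paper writes in chain-level notation as $\gamma_1(H_1(D_\bullet))$. With those identifications fixed, both routes converge on the same short calculation.
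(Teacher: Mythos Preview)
Your proof is correct. It does, however, take a different route from the paper's main argument: the paper computes $H_1(\tot E)$ via the iterated-homology (spectral sequence) formula $H_n(\tot E)\cong\bigoplus_{p+q=n}H_q(H_p(E_{\bullet,\bullet},\partial^h),\partial^v)$ from~\cite{breuckmann2021balanced}, first taking horizontal homology to get an $E^1$ page and then vertical homology, whereas you identify $\tot(E)_\bullet$ with the mapping cone of $\gamma_\bullet$ and read off $H_1$ from the long exact sequence. Your argument is more direct and yields the isomorphism $H_1(\tot E)\cong H_1(C_\bullet)/\gamma_\ast(H_1(D_\bullet))$ in one step, which simultaneously proves Lemma~\ref{lem:homology-basis}; the spectral-sequence route is heavier but generalizes immediately to double complexes that are not cones. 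It is worth noting that the paper's appendix on coning presents exactly your mapping-cone argument as an alternative derivation, so you have independently reproduced that approach. Your by-hand verification of the kernel parametrization is also sound, and correctly relies on the $D_{-1}$ augmentation forcing $d\in\im H'_Z$; this is the same mechanism by which $H_0(D_\bullet)=0$ kills the $\ker f_\ast$ term in the appendix's exact sequence.
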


\begin{proof}
    From \cite{breuckmann2021balanced} (Theorem 2), we have:
    \[H_n(\tot(E)) \cong \bigoplus_{p+q =n} H_q (H_p(E_{\bullet, \bullet}, \partial^h), \partial^v)\]

    Let's unravel this expression: $ H_{\bullet} (H_{\bullet}(E_{\bullet, \bullet}, \partial^h), \partial^v)$ denotes a complex onto itself.
    We will go through the process of iteratively building it up so that the expression for $H_n(\tot(E))$ makes sense.
    First, $H_p(E_{\bullet, q}, \partial^h)$ denote the complex obtained from taking the homology groups along the $\partial^h$ axis.
    For the sake of clarity, we explicitly have:

    \[
     H_{\bullet}(E_{\bullet, \bullet}, \partial^h) = \begin{tikzcd}[ampersand replacement=\&]
        H_{2}(E_{\bullet, 1}) \arrow[r,"\partial^h_{2,1}"] \arrow[d,"\partial^v_{2,1}"] 
        \&  H_{1}(E_{\bullet, 1})\arrow[r,"\partial^h_{1,1}"] \arrow[d,"\partial^v_{1,1}"] 
        \& H_{0}(E_{\bullet, 1}) \arrow[r,"\partial^h_{0,1}"] \arrow[d,"\partial^v_{0,1}"] 
        \& H_{-1}(E_{\bullet, 1}) 
        \\
        H_{2}(E_{\bullet, 0}) \arrow[r,"\partial^h_{2,0}"] \&  H_{1}(E_{\bullet, 0}) \arrow[r,"\partial^h_{1,0}"] 
        \& H_{0}(E_{\bullet, 0})
    \end{tikzcd}
    \]
    Where: 
    \begin{align*}
        & H_{2}(E_{\bullet, 1}) = \ker(\partial^h_{2,1}) & H_{1}(E_{\bullet, 1}) = \ker(\partial^h_{1,1}) / \im(\partial^h_{2,1}) = k_D \\ 
        &  H_{0}(E_{\bullet, 1}) = \ker(\partial^h_{0,1})/ \im(\partial^h_{1,1}) =0 & H_{-1}(E_{\bullet, 1}) = E_{-1,1} / \im(\partial^h_{0,1}) \\
        & H_{2}(E_{\bullet, 0}) = \ker(\partial^h_{2,0}) & H_{1}(E_{\bullet, 0}) = \ker(\partial^h_{0,0})/ \im(\partial^h_{1,0}) = k_C  \\
        & H_{0}(E_{\bullet, 0}) = E_{0,0}/\im(\partial^h_{1,0})
    \end{align*}

    Note that the horizontal arrows are trivialised, they now send every element to $0$.
    To keep the notation light, we will relabel this complex:

    \[
     E^1_{\bullet, \bullet} = \begin{tikzcd}[ampersand replacement=\&]
        E^1_{2,1} \arrow[r,"\partial^h_{2,1}"] \arrow[d,"\partial^v_{2,1}"] 
        \& E^1_{1,1}\arrow[r,"\partial^h_{1,1}"] \arrow[d,"\partial^v_{1,1}"] 
        \& E^1_{0,1} \arrow[r,"\partial^h_{0,1}"] \arrow[d,"\partial^v_{0,1}"]
        \& E^1_{-1,1}
        \\
        E^1_{2,0}  \arrow[r,"\partial^h_{2,0}"] 
        \& E^1_{1,0}  \arrow[r,"\partial^h_{1,0}"] 
        \& E^1_{0,0}
    \end{tikzcd}
    \]

    The final step is to compute $H_{\bullet} (H_{\bullet}(E_{\bullet, \bullet}, \partial^h), \partial^v) =  H_{\bullet} (E^1_{\bullet, \bullet}, \partial^v)$

    \[
    H_{\bullet} (E^1_{\bullet, \bullet}, \partial^v) = \begin{tikzcd}[ampersand replacement=\&]
        H_{1}(E^1_{2, \bullet}) \arrow[r,"\partial^h_{2,1}"] \arrow[d,"\partial^v_{2,1}"] 
        \& H_{1}(E^1_{1,\bullet})\arrow[r,"\partial^h_{1,1}"] \arrow[d,"\partial^v_{1,1}"] 
        \& H_{1}(E^1_{0,\bullet}) \arrow[r,"\partial^h_{0,1}"] \arrow[d,"\partial^v_{0,1}"] 
        \& H_{1}(E^1_{-1,\bullet}) 
        \\
        H_{0}(E^1_{2, \bullet}) \arrow[r,"\partial^h_{2,0}"] 
        \& H_{0}(E^1_{1,\bullet}) \arrow[r,"\partial^h_{1,0}"] 
        \& H_{0}(E^1_{0,\bullet})
    \end{tikzcd}
    \]
    According to the previous formula, we have:
    \begin{equation}
        H_1(\tot E) \cong H_{0}(E^1_{1,\bullet}) \oplus H_{1}(E^1_{0,\bullet})
    \end{equation}
    As this expression involves only two modules, $H_{0}(E^1_{1,\bullet})$ and $H_{1}(E^1_{0,\bullet})$, we will focus uniquely on them. We first have:
    \[
    H_{0}(E^1_{1,\bullet}) = E^1_{1,0} /  \partial^v_{1,1} (E^1_{1,1})
    \]
    Since $\gamma_1$ is a chain homomorphism, we have $\partial^v_{1,1} (E^1_{1,1}) \subset E^1_{1,0}$, which gives:
    \begin{align}
        \dim H_{0}(E^1_{1,\bullet}) &= \dim E^1_{1,0} - \dim  \partial^v_{1,1} (E^1_{1,1}) \\ 
        &= k_C - \dim  \partial^v_{1,1} (E^1_{1,1}) \\
        &= k_C - \dim \gamma_1 (H_1(D_{\bullet}))
    \end{align}
    Now, moving on to $H_{1}(E^1_{0,\bullet})$, note that $\dim E^1_{0,1} = \dim H_{0}(E_{\bullet, 1}) = 0$. Its homology group can then only be trivial: $H_{1}(E^1_{0,\bullet}) = \idty$. We now have all the elements to conclude:
    \[
    \dim H_1(\tot E) =  \dim H_{0}(E^1_{1,\bullet}) =  k_C - \dim \gamma_1 (H_1(D_{\bullet}))
    \]
    
\end{proof}

The number of logicals $k_E$ can now be leveraged to justify that a basis for $H_1(C_\bullet ) / \gamma_1 (H_1(D_{\bullet}))$ is sufficient to find a basis for the entirety of $H_1(\tot E)$.

\begin{lemma}
    \label{lem:homology-basis}
    Let $\{\vec{u}\}_u$ be a basis for $H_1(C_\bullet ) / \gamma_1 (H_1(D_{\bullet}))$, then it is also a basis for $H_1(\tot E)$. 
\end{lemma}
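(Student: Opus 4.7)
The plan is to construct an explicit isomorphism $\phi: H_1(C_\bullet)/\gamma_1(H_1(D_\bullet)) \to H_1(\tot E)$ induced by the obvious inclusion $C_1 \hookrightarrow D_0 \oplus C_1$, and then invoke Lemma \ref{lemma:dimension} to conclude that any lift of a basis is itself a basis.

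First, I would define the map $\Phi: Z_1(C_\bullet) \to \ker\begin{pmatrix}\gamma_0 & H_Z\end{pmatrix}$ at the level of chains by $\Phi(\vec{u}) = (0, \vec{u})$, where qubits in $\tot E$ live in $D_0 \oplus C_1$. The containment $\Phi(\vec{u}) \in \ker\begin{pmatrix}\gamma_0 & H_Z\end{pmatrix}$ is immediate since $H_Z\vec{u}=0$ for any cycle $\vec{u}$. Next, I would check that $\Phi$ descends to homology, i.e., sends $\im H_X^T$ into the image of the total $X$-boundary $\begin{pmatrix} H'_Z & 0 \\ \gamma_1 & H_X^T \end{pmatrix}$: if $\vec{u} = H_X^T \vec{b}$, then $(0,\vec{u}) = \begin{pmatrix} H'_Z & 0 \\ \gamma_1 & H_X^T \end{pmatrix}(0,\vec{b})$. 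This gives a well-defined $\phi: H_1(C_\bullet) \to H_1(\tot E)$.

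The main step is to identify $\ker \phi$. Suppose $(0,\vec{u}) = \begin{pmatrix} H'_Z & 0 \\ \gamma_1 & H_X^T \end{pmatrix}(\vec{a},\vec{b})$ for some $\vec{a}\in D_1$ and $\vec{b}\in C_2$. The top row forces $H'_Z\vec{a} = 0$, i.e., $\vec{a} \in Z_1(D_\bullet)$. The bottom row gives $\vec{u} = \gamma_1\vec{a} + H_X^T\vec{b}$. Using that $\gamma_\bullet$ is a chain map (so $\gamma_1$ sends $Z_1(D_\bullet)$ to $Z_1(C_\bullet)$ and $B_1(D_\bullet)$ to $B_1(C_\bullet)$), the class $[\vec{u}]\in H_1(C_\bullet)$ lies in $\gamma_1(H_1(D_\bullet))$. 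Conversely, any class of the form $[\gamma_1\vec{a}]$ for $\vec{a}\in Z_1(D_\bullet)$ is killed by this construction. Hence $\ker\phi = \gamma_1(H_1(D_\bullet))$, and $\phi$ induces an injection
\begin{equation}
\bar{\phi}: H_1(C_\bullet)/\gamma_1(H_1(D_\bullet)) \hookrightarrow H_1(\tot E).
\end{equation}

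Finally, by Lemma \ref{lemma:dimension} the target has dimension $k_C - \dim \gamma_1(H_1(D_\bullet))$, which exactly matches the dimension of the source. Hence $\bar{\phi}$ is an isomorphism, and the image of any basis $\{\vec{u}\}_u$ of $H_1(C_\bullet)/\gamma_1(H_1(D_\bullet))$ under $\bar{\phi}$ is a basis of $H_1(\tot E)$; under the notational convention (explicitly highlighted in the Remark following Definition \ref{def:systolic_expansion}) of identifying homology classes with representative cycles, this is precisely the statement of the lemma.

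The main subtlety I expect is the kernel computation, specifically being careful about what the condition $H'_Z \vec{a} = 0$ entails, namely that $\vec{a}$ must be a genuine cycle of $D_\bullet$ (rather than an arbitrary element), so that $[\gamma_1 \vec{a}]$ lands in the subspace $\gamma_1(H_1(D_\bullet)) \subseteq H_1(C_\bullet)$ rather than some larger subspace. Everything else is dimension-counting bookkeeping once the map $\phi$ has been defined.
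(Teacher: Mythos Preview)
Your proof is correct and follows essentially the same approach as the paper: embed $Z_1(C_\bullet)$ into $Z_1(\tot E_\bullet)$ via $\vec{u}\mapsto(0,\vec{u})$, argue this descends to an injection of $H_1(C_\bullet)/\gamma_1(H_1(D_\bullet))$ into $H_1(\tot E)$, and invoke Lemma~\ref{lemma:dimension} for dimension equality. Your explicit kernel computation is in fact more thorough than the paper's argument, which simply asserts that the quotient ``maps to independent equivalence classes'' without spelling out why.
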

\begin{proof}
    
    We can readily verify that $Z_1(C_\bullet) \subset Z_1(\tot E_\bullet)$. 
    Therefore the group $(Z_1(C_\bullet) / B_1(C_\bullet) ) / \gamma_1 (H_1(D_{\bullet})) = H_1(C_\bullet ) / \gamma_1 (H_1(D_{\bullet})) $ maps to independent equivalence classes in $H_1(\tot E_\bullet)$. Moreover, the dimension of that group also happens to match that of $\dim  H_1(\tot E_\bullet)$:
    \begin{align}
        \dim H_1(C_\bullet ) / \gamma_1 (H_1(D_{\bullet})) &= k_C - \dim \gamma_1 (H_1(D_{\bullet})) \\ &= \dim  H_1(\tot E_\bullet)
    \end{align}

    We conclude that a basis for $ H_1(C_\bullet ) / \gamma_1 (H_1(D_{\bullet}))$ forms a basis for $H_1(\tot E)$.
    
\end{proof}

\begin{theorem}
    \label{thm:surgery-algorithm}
    The code deformation defined by $C_\bullet \to \tot (E)_\bullet \to C_\bullet$ given by Algorithm \ref{alg:single-shot-surgery} measures the logical operators given by $\gamma_1\left( H_1\left( D_\bullet \right) \right)$
\end{theorem}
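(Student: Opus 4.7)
The plan is to read Algorithm~\ref{alg:single-shot-surgery} as the protocol that enters the deformed code $\tot(E)_\bullet$, extracts the logical information via its new stabilizers, and then exits back to $C_\bullet$. I would begin by identifying each $A_v = \mu_X((\partial_1^D+\gamma_1)v)$ with the $X$-stabilizer of $\tot(E)_\bullet$ indexed by $v\in D_1$; the remaining $X$-stabilizers (those from $C_2$) are already satisfied by $\ket{\Psi}$. Likewise every $Z$-stabilizer of $\tot(E)_\bullet$ has the form $\mu_Z(\gamma_0^T c_0)\otimes \mu_Z(H_Z^T c_0)$, whose ancilla factor stabilizes $\ket{0}^{\otimes|D_0|}$ and whose data factor stabilizes $\ket{\Psi}$. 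Hence the initial state $\ket{\Psi}\otimes\ket{0}^{\otimes|D_0|}$ already lies in the $+1$ eigenspace of every $Z$-stabilizer of $\tot(E)_\bullet$, so the first measurement loop simply fixes the $X$-syndrome of the deformed code.

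For each $h_l$, choose $\fv_l$ to be the support of a cycle preimage $\tilde v_l \in Z_1(D_\bullet)$ with $\gamma_1(\tilde v_l)=h_l$, which exists because $h_l\in \gamma_1(H_1(D_\bullet))$. Then
\[
\prod_{v\in\fv_l} A_v \;=\; \mu_X(\partial_1^D \tilde v_l)\otimes \mu_X(h_l) \;=\; \idty_{\mathrm{anc}}\otimes \mu_X(h_l),
\]
so the product of outcomes $\sigma_l$ is exactly the eigenvalue of $\mu_X(h_l)$ on the input state. By Lemma~\ref{lem:homology-basis} the chosen $h_l$ span the full subspace $\gamma_1(H_1(D_\bullet))$ being measured, so the protocol produces a complete syndrome for that subspace.

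The second loop disentangles the ancilla by measuring every $Z_e$ and then applies the correction $\mu_X((\partial_1^D+\gamma_1)v')$, where $\partial_1^D v'$ equals the ancilla outcome string. Two checks remain. First, this string lies in $\mathrm{Im}(\partial_1^D)$: any $Z^a$ on the ancilla with $a$ orthogonal to $\mathrm{Im}(\partial_1^D)$ commutes with every $A_v$ and has value $+1$ on $\ket{0}^{\otimes|D_0|}$, so the ancilla outcomes must annihilate $\ker(\partial_1^D)^T$. Second, the correction has to do two things simultaneously: flip the ancilla back to $\ket{0}$, and cancel the sign that the $Z$-measurements imprint on each original $Z$-stabilizer $\mu_Z(H_Z^T c_0)$. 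This cancellation is the heart of the argument and is precisely where the chain-map identity $H_Z\gamma_1 = \gamma_0 \partial_1^D$ is used: it implies that $\mu_X(\gamma_1 v')$ anticommutes with $\mu_Z(H_Z^T c_0)$ exactly when $\gamma_0^T c_0 \cdot \partial_1^D v' = 1$, which is the same sign inherited from the ancilla readout. Since $\mu_X(\gamma_1 v')$ commutes with every $\mu_X(h_l)$, the $\sigma_l$ survive, and discarding the ancilla leaves the state $\prod_l(\idty+\sigma_l \mu_X(h_l))\ket{\Psi}$, as claimed. The main obstacle is this last cancellation; once it is in hand, the rest is routine stabilizer bookkeeping.
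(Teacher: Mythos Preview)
Your argument is correct, and in fact proves more than the paper's own proof of this theorem. The paper's proof is two lines: it observes that $\gamma_1(H_1(D_\bullet))\subset \im\bigl(\tot(E)_2\bigr)$, so those operators are stabilizers of the deformed code, and then invokes Lemma~\ref{lem:homology-basis} to conclude that no other logical of $C_\bullet$ lands in the stabilizer group. That is all---the paper leaves the identification of the algorithm with the code deformation $C_\bullet\to\tot(E)_\bullet\to C_\bullet$, the extraction of the $\sigma_l$, and the correctness of the exit step entirely implicit.

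By contrast, you trace the protocol operationally: you match each $A_v$ to a deformed $X$-stabilizer, verify the initial state already satisfies the deformed $Z$-stabilizers, show that $\sigma_l$ equals the eigenvalue of $\mu_X(h_l)$ when $\fv_l$ supports a cycle in $Z_1(D_\bullet)$, and then check that the chain-map identity $H_Z\gamma_1=\gamma_0\partial_1^D$ makes the exit correction simultaneously reset the ancilla and restore the $Z$-syndrome of $C_\bullet$. This is a genuinely different and more complete argument: it certifies not only \emph{which} logicals are measured but also that Algorithm~\ref{alg:single-shot-surgery} returns the correct post-measurement state. Two small points to tighten: (i) your computation $\prod_{v\in\fv_l}A_v=\idty\otimes\mu_X(h_l)$ requires $\sum_{v\in\fv_l}v\in Z_1(D_\bullet)$, which is a particular (natural) choice of $\fv_l$ that you should state explicitly, since the algorithm as written only demands $\sum_v\gamma_1(v)=h_l$; (ii) the deformed code also carries $Z$-checks indexed by $D_{-1}$ (supported purely on the ancilla), which you omit from your enumeration---they are trivially $+1$ on $\ket{0}^{\otimes|D_0|}$, so your conclusion is unaffected, but the list of $Z$-stabilizers is incomplete as stated.
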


\begin{proof}

    \begin{enumerate}
        \item $\gamma_1\left( H_1\left( D_\bullet \right) \right) \subset \im \tot (E)_2$, or equivalently, the operators in $\gamma_1\left( H_1\left( D_\bullet \right) \right)$ are in the image of the stabilizers $\tot (E)_2$
        \item From Lemma \ref{lem:homology-basis}, we can read off that no other logical in $H_1(C_\bullet )$ is in $\im \tot (E)_2$, i.e. they are not measured.
    \end{enumerate}
    
\end{proof}

\subsection{Proof of distance}
\label{subsec:distance}

In this section we demonstrate that the code given by the total complex $\tot (E)_\bullet$ is distance preserving, assuming several conditions on $D_\bullet$ and the maps $\gamma_\bullet$ are satisfied.

\begin{theorem}
    \label{thm:distances-of-surgery}
    Let $C_\bullet, D_\bullet$ be chain complexes, and $\gamma_\bullet$ a chain homomorphism from $D_\bullet$ to $C_\bullet$. We assume these objects satisfy the following:
    \begin{enumerate}
        \item  \label{prop:bounded-maps}${\partial_1^D}, {\partial_2^C}, {\partial_1^C}, \gamma_1$ and their transpose are $\omega$-bounded, for some $\omega$
        \item  $D_1$ is $\epsilon$-systolic expanding, for some $\epsilon \geq \omega$
    \end{enumerate}
   Then:
    \begin{enumerate}
        \item $d_X(\tot E) \geq  d_X(C)$
        \item $d_Z(\tot E) \geq d_Z(C)$
        \item For any $s \in H_1(D_\bullet )$, $d_s \geq d_Z(D)$ 
    \end{enumerate}
\end{theorem}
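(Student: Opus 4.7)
The plan is to handle the three bounds separately, each leveraging the chain-map compatibility, the bounded-weight hypothesis, and the systolic expansion of $D_1$.

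For $d_X(\tot E) \geq d_X(C)$, I take a non-trivial cycle $(u, c) \in Z_1(\tot E) \subset D_0 \oplus C_1$ and aim to clear its $D_0$-component. The cycle condition in $\tot E$ forces $\partial_0^D u = 0$, so $u \in \im \partial_1^D$ and $u = \partial_1^D v$ for some $v \in D_1$ which I choose of minimum weight modulo $Z_1(D_\bullet)$; systolic expansion of $D_1$ gives $\|v\| \leq \|u\|/\epsilon$. Subtracting the boundary $\partial_2^{\tot E}(v, 0) = (u, \gamma_1 v)$ yields the equivalent representative $(0, c')$ with $c' = c + \gamma_1 v$, and the chain-map identity $H_Z \gamma_1 = \gamma_0 \partial_1^D$ combined with $\gamma_0 u + H_Z c = 0$ forces $c' \in Z_1(C_\bullet) \setminus B_1(C_\bullet)$ (the non-triviality of $c'$ is inherited from that of $(u,c)$, else the latter would itself be a boundary in $\tot E$). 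Then $\|c'\| \leq \|c\| + \omega\|v\| \leq \|c\| + (\omega/\epsilon)\|u\| \leq \|(u,c)\|$, using $\epsilon \geq \omega$, gives $d_X(C) \leq \|c'\| \leq \|(u,c)\|$ as desired.

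Dually, for $d_Z(\tot E) \geq d_Z(C)$, I take a non-trivial Z-cocycle $(u, c)$; the cocycle condition yields $H_X c = 0$ so $c \in Z^1(C^\bullet)$, and I must rule out $c \in B^1(C^\bullet)$. If $c = H_Z^T c_0$, subtracting $\delta_0^{\tot E}(c_0, 0)$ produces the equivalent representative $(u', 0)$ with $u' \in Z^0(D^\bullet)$. The crucial ingredient here is the definition $D_{-1} = D_0 / \im \partial_1^D$, which makes $\partial_0^D$ surjective and hence $(\partial_0^D)^T$ injective with image precisely $Z^0(D^\bullet) = \ker(\partial_1^D)^T$; this forces $u' = (\partial_0^D)^T d$ for some $d \in D_{-1}$, and hence $(u', 0) = \delta_0^{\tot E}(0, d)$ is itself a coboundary, contradicting non-triviality. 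Therefore $c$ is non-trivial in $H^1(C^\bullet)$ and $\|(u,c)\| \geq \|c\| \geq d_Z(C)$.

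For the surgery distance $d_s \geq d_Z(D)$, an undetectable error has the form $\vec{e} = H_X^{\tot E}(u, c) = \bigl((\partial_1^D)^T u + \gamma_1^T c,\ H_X c\bigr)$ with constraint $\vec{s} \cdot \vec{e} = h \cdot c = 1$, where $\vec{s} = (z, 0)$ and $h = \gamma_1 z$. In the clean case $H_X c = 0$, the dual chain-map identity $(\partial_2^D)^T \gamma_1^T = \gamma_2^T H_X$ makes $\gamma_1^T c$ a genuine cocycle of $D^\bullet$, and the pairing $\gamma_1^T c \cdot z = c \cdot h = 1$ forces its class in $H^1(D^\bullet)$ to be non-trivial (using that $z$ is non-trivial as a cycle in $D$, as it represents the logical being measured); hence $\min_u \|(\partial_1^D)^T u + \gamma_1^T c\| \geq d_Z(D)$, giving the bound. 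In the general case $H_X c \neq 0$, the cocycle defect $\delta_1^D(\gamma_1^T c) = \gamma_2^T H_X c$ has weight at most $\omega \|H_X c\|$ by bounded weight, so $\gamma_1^T c$ lies within $D_1$-distance at most $\omega \|H_X c\|/\epsilon$ of a cocycle representative, and a pairing-and-triangle-inequality argument then shows $\|e_D\| + \|H_X c\| \geq d_Z(D)$ thanks to $\epsilon \geq \omega$. The main obstacle is exactly this last step: certifying that the nearby cocycle still pairs non-trivially with $z$ so that its cohomology class contributes weight at least $d_Z(D)$ is delicate, and is precisely where the trade-off between the bounded-weight constant $\omega$ and the expansion constant $\epsilon$ is saturated; the other two bounds are comparatively routine once the role of $D_{-1}$ in trivializing stray cocycles supported purely on $D_0$ is identified.
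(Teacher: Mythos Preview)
Your arguments for items~1 and~2 are correct and are essentially the same as the paper's (Lemmas~\ref{lemma:dx-distance} and~\ref{lemma:dz-distance}). For $d_X$ you clean the $D_0$-component of a minimal representative using expansion, whereas the paper starts from the $C_1$-supported basis of $H_1(\tot E)$ (Lemma~\ref{lem:homology-basis}) and bounds the effect of adding $(\gamma_1+\partial_1^D)u$; these are the two faces of the same triangle-inequality/expansion computation. For $d_Z$ your use of $D_{-1}$ to trivialise any cocycle supported purely on $D_0$ is exactly the paper's mechanism.

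For item~3 your ``clean case'' $H_Xc=0$ is precisely the content of Lemma~\ref{lemma:lower-bound-surgery-distance}: you correctly observe that $\gamma_1^Tc$ is a $D$-cocycle pairing nontrivially with the cycle $z\in H_1(D)$, hence represents a nontrivial class in $H^1(D)$, so every coboundary-shift $(\partial_1^D)^Tu+\gamma_1^Tc$ has weight $\ge d_Z(D)$. That is all the paper proves and all that is needed: the surgery-distance bound here concerns measurement errors on the \emph{new} check space $D_1$, and the $C_2$-component is neutralised separately in Theorem~\ref{thm:fault-distance-algorithm} via the assumption of reliable pre/post $C_2$ syndrome extraction (which forces $e'_{C_2}=0$). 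Your attempted ``general case'' $H_Xc\neq0$ is therefore unnecessary for this theorem; it is also not justified as stated, since you invoke $\omega$-boundedness of $\gamma_2^T$ (to bound $\lVert\gamma_2^T H_Xc\rVert$), and $\gamma_2$ is \emph{not} among the maps assumed $\omega$-bounded in hypothesis~\ref{prop:bounded-maps}. Drop that paragraph and the proof is complete.
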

\begin{proof}
    Note Lemma \ref{lemma:dx-distance}, \ref{lemma:dz-distance} and \ref{lemma:lower-bound-surgery-distance}.
\end{proof}

\begin{lemma}
\label{lemma:dx-distance}
    $d_X(\tot E) \geq \min (1, \frac{\epsilon}{\omega})\cdot d_X(C)$.
    In particular, if $D_1$ is $\epsilon$-systolic expanding for some $\epsilon \geq \omega$, then $d_X(\tot E) \geq d_X(C)$.
\end{lemma}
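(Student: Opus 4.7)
The plan is to take any representative $(u,c) \in D_0 \oplus C_1$ of a non-trivial $X$-logical of $\tot E$, modify it by an $X$-stabilizer chosen via the $\epsilon$-systolic expansion of $D_1$ so that it is equivalent to a representative of the form $(0, c')$, identify $c'$ as a non-trivial $X$-logical of $C_\bullet$, and convert the resulting weight inequality into the claimed distance bound.

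First I would unpack the kernel condition. The total complex $\tot E_\bullet$ — implicitly including $D_{-1} = D_0 / \im \partial_1^D$ as part of the $Z$-checks with $\partial_0^D$ the quotient map, as used in the proof of Lemma \ref{lemma:dimension} — has $Z$-check map sending $(u,c)$ to $(\partial_0^D u,\, \gamma_0 u + H_Z c)$. The first component vanishing forces $u \in \ker \partial_0^D = \im \partial_1^D$, so a preimage $a \in D_1$ with $\partial_1^D a = u$ exists. Applying the $\epsilon$-systolic expansion inequality $\|\partial_1^D a\| \geq \epsilon \cdot \min_{z \in Z_1(D_\bullet)} \|a+z\|$, I may subtract a minimizing cycle to achieve $\|a\| \leq \|u\|/\epsilon$ without changing $\partial_1^D a$. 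Then $\partial_2^{\tot E}(a,0) = (u, \gamma_1 a)$ is an $X$-stabilizer, and adding it to $(u,c)$ gives the equivalent representative $(0, c')$ with $c' = c + \gamma_1 a$.

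Next I would verify that $c'$ is a non-trivial $X$-logical of $C_\bullet$. The chain-map identity $H_Z \gamma_1 = \gamma_0 \partial_1^D$ combined with $\gamma_0 u + H_Z c = 0$ gives $H_Z c' = H_Z c + \gamma_0 u = 0$. Non-triviality is forced because $c' \in \im H_X^T$ would make $(0, c')$ itself lie in $\im \partial_2^{\tot E}$, contradicting the assumed non-triviality of the original logical. Hence $\|c'\| \geq d_X(C)$, and the $\omega$-boundedness of $\gamma_1$ gives $\|\gamma_1 a\| \leq \omega \|a\| \leq (\omega/\epsilon)\|u\|$. The triangle inequality then yields $\|c\| + (\omega/\epsilon)\|u\| \geq d_X(C)$. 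A short case split on $\epsilon \geq \omega$ (coefficient $\omega/\epsilon \leq 1$, so $\|u\|+\|c\|\geq d_X(C)$ directly) versus $\epsilon < \omega$ (multiply by $\epsilon/\omega \leq 1$ and use $(\epsilon/\omega)\|c\|\leq\|c\|$) converts this into $\|u\|+\|c\| \geq \min(1,\epsilon/\omega) d_X(C)$, from which the ``in particular'' statement is immediate.

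The main obstacle I anticipate is accounting for the augmentation by $D_{-1}$: it is not visible in the four-term display of $\tot E_\bullet$ in the preceding text, but it is essential both to force $u \in \im \partial_1^D$ and to match the dimension count in Lemma \ref{lemma:dimension}. Beyond that, extracting an explicit low-weight preimage $a$ from the abstract $\epsilon$-systolic expansion definition, and confirming that the reduced representative $(0,c')$ remains non-trivial as an $X$-logical of $C_\bullet$ rather than only of $\tot E_\bullet$, are the two remaining subtle substeps; each follows essentially mechanically from the definitions once the augmentation is in place.
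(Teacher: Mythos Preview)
Your argument is correct and rests on the same key ingredients as the paper's proof: the $\epsilon$-systolic expansion of $D_1$ to control the weight of a $D_1$-preimage, the $\omega$-boundedness of $\gamma_1$, and the fact that the cleaned $C_1$-component is a non-trivial $X$-logical of $C_\bullet$. The difference is purely one of direction. The paper invokes Lemma~\ref{lem:homology-basis} to start from a representative $w$ lying entirely in $C_1$ and then lower-bounds $\|w+\gamma_1 u\|+\|\partial_1^D u\|$ over all $u\in D_1$, splitting off the nearest cycle $P_u$ and applying expansion to $u'=u+P_u$. You run the same computation in reverse: begin with an arbitrary representative $(u,c)$, use the $D_{-1}$-augmentation directly to force $u\in\im\partial_1^D$, pick a minimal preimage $a$ via expansion, and clean to $(0,c')$. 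Your route is slightly more self-contained in that it does not invoke Lemma~\ref{lem:homology-basis} (the non-triviality of $c'$ is checked by hand); the paper's route makes explicit that the estimate is about how much the $D_1$-indexed stabilizers can reduce the weight of a pure-$C_1$ logical. The final inequality $\|c\|+(\omega/\epsilon)\|u\|\geq d_X(C)$ and the $\epsilon\gtrless\omega$ case split are the same in both.
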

\begin{proof}
    Denote $\cL' \subset C_1\oplus D_0$ the space spanned by the basis given by Lemma \ref{lem:homology-basis}. We can then compute the distance as:
    \[
    d_X( \tot E) = \min_{L \in \cL' \setminus 0} \norm{L + \im (\partial_2^C) + \im (\gamma_1+\partial_1^D)}
    \]
    Ideally, we would like to lower bound $d_X( \tot E)$ in terms of $d_X(C)$. To that effect, we will describe to what extent can $w \in \cL' \setminus 0 + \im (\partial_2^C)$ see its weight be reduced by the stabilizers in $\tot (E)_2$.
    First, from the systolic expansion of $\partial_1^D$, we have the following:
     \begin{align*}
        \forall u \not\in \ker(\partial_1^D), \quad \norm{w + (\gamma_1+\partial_1^D)u} &\geq  \norm{w + \gamma_1u} + \norm{\partial_1^D u}\\
        & \geq \norm{w + \gamma_1u} + \epsilon \cdot \min_{P \in \ker(\partial_1^D)} \norm{u+P}
    \end{align*}
    
    Note that only the case where $u \not\in \ker(\partial_1^D)$ is relevant here, otherwise $\partial^C_1 \circ \gamma_1 (u) = 0$, and $w + \gamma_1 u \in \ker(\partial^C_1)$, where it becomes trivial to lower bound its weight.
    We write $P_u$ an operator that saturates the minimisation for $u$, then:
    \[
    \norm{w + (\gamma_1+\partial_1^D)u} \geq \norm{w + \gamma_1(u+P_u) + \gamma_1(P_u)} + \epsilon  \norm{u+P_u}
    \]
    Remember that $\gamma_1$ is a chain homomorphism, hence $P'_u \equiv \gamma_1 (P_u)  \in \ker(\partial^C_1)$. 
    We write $w' \equiv w + P'_u$, and $u' \equiv u + P_u$; note that from the definition of $\cL'$, $w' \in \cL' \setminus 0 + \im (\partial_2^C)$ too. The above expression now simplifies to the following:
    \[
    \norm{w + (\gamma_1+\partial_1^D)u} \geq \norm{w' + \gamma_1(u') } + \epsilon  \norm{u'}
    \]
    We will now address the case $\epsilon \geq \omega$ first, and then that of $\epsilon < \omega$ separately. Remember that $w' \in \cL' \setminus 0 + \im (\partial_2^C)$, and all the elements of $\cL' \setminus 0 + \im (\partial_2^C)$ satisfy $\norm{w'} \geq d_X(C)$. Hence:
    \begin{align*}
        \norm{w + (\gamma_1+\partial_1^D)u} &\geq \norm{w' + \gamma_1(u') } + \epsilon  \norm{u'} \\ & \geq \norm{w' } - \norm{\gamma_1(u')} + \epsilon  \norm{u'} \\
        & \geq \norm{w' } - \omega \norm{u'} + \epsilon  \norm{u'} \\ &\geq \norm{w' } \geq d_X(C)
    \end{align*}
    We now proceed with the case $\epsilon < \omega$. From the boundedness of $\gamma_1$, we have that $\norm{u'} \geq \norm{\gamma_1 u'}/\omega$, and from the reverse triangle inequality $\norm{\gamma_1 u'} \geq \norm{w'} - \norm{w' + \gamma_1 u'}$. Combining these bounds gets us:
   \begin{align*}
       \norm{w + (\gamma_1+\partial_1^D)u} &\geq \norm{w' + \gamma_1(u') } + \epsilon \frac{\norm{w'} - \norm{w' + \gamma_1 u'}}{\omega}\\
       & = \frac{\epsilon}{\omega}\norm{w'} + (1-\frac{\epsilon}{\omega})\norm{w' + \gamma_1 u'} \\
       & \geq \frac{\epsilon}{\omega}\norm{w'} \geq \frac{\epsilon}{\omega} d_X(C)
   \end{align*}

\end{proof}

\begin{lemma}
    \label{lemma:dz-distance}
    $d_Z( \tot E) \geq d_Z(C)$
\end{lemma}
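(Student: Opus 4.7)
The plan is to argue by contradiction: suppose $L = (L_D, L_C) \in D_0 \oplus C_1$ is a non-trivial $Z$-logical of $\tot E$ with $\norm{L} < d_Z(C)$, and show that $L$ must in fact be a $Z$-stabilizer. Expanding the block condition $L \in \ker H_X^{\tot E}$ gives $(\partial_1^D)^T L_D + \gamma_1^T L_C = 0$ and $H_X L_C = 0$. Since $\norm{L_C} \leq \norm{L} < d_Z(C)$ and $L_C \in \ker H_X$, the vector $L_C$ cannot represent a non-trivial $Z$-logical of $C$, so $L_C = H_Z^T y$ for some $y \in C_0$.

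The next step is to clean the $C_1$-component of $L$ by subtracting the stabilizer $H_Z^{\tot E\,T} y = (\gamma_0^T y, H_Z^T y)$, yielding the equivalent representative $L' = (L_D + \gamma_0^T y,\, 0)$. Using the transpose of the chain-map identity $\gamma_0 \partial_1^D = \partial_1^C \gamma_1$, namely $(\partial_1^D)^T \gamma_0^T = \gamma_1^T H_Z^T$, combined with the first block equation, one verifies $(\partial_1^D)^T L'_D = 0$, i.e., $L'_D \in \ker(\partial_1^D)^T$. I then invoke the remark preceding Lemma~\ref{lemma:dimension}, which fixes $(D_{-1}, \partial_0^D)$ so that $\ker \partial_0^D = \im \partial_1^D$; dualising over $\mathbb{F}_2$ (using $\ker M^T = (\im M)^\perp$ and $\im M^T = (\ker M)^\perp$) this becomes $\ker (\partial_1^D)^T = \im (\partial_0^D)^T$. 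Hence $L'_D = (\partial_0^D)^T z$ for some $z \in D_{-1}$, which exhibits $L'$ as a $Z$-stabilizer arising from the $D_{-1}$ block of $Z$-checks in $\tot E$. Therefore $L$ itself is a stabilizer, contradicting non-triviality.

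The main subtlety is the bookkeeping of which $Z$-stabilizers are available in $\tot E$: although the displayed total complex truncates at $C_0$, the auxiliary module $D_{-1}$ contributes additional $Z$-checks via $\partial_0^D$, and this extra freedom is exactly what is needed to zero out the residual $D_0$-component of the cleaned logical. In contrast to Lemma~\ref{lemma:dx-distance}, no use of the systolic expansion of $D_1$ or $\omega$-boundedness of any map is required; the argument is purely homological, relying only on the chain-map structure of $\gamma_\bullet$ and the single-shot condition $H_0(D_\bullet) = 0$ enforced by the choice of $D_{-1}$.
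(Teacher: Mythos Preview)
Your proof is correct and follows essentially the same approach as the paper's. Both arguments hinge on the observation that the $C_1$-component $c$ of any non-trivial element of $H^1(\tot E)$ must represent a non-trivial class in $H^1(C_\bullet)$, which in turn relies on (i) the dualized single-shot condition $\ker(\partial_1^D)^T = \im(\partial_0^D)^T$ and (ii) the ability to clean $c$ to zero using the $C_0$-block of $Z$-stabilizers. The paper phrases this as a direct argument (``$c$ is non-trivial in $H^1(C)$, hence $\norm{c}\geq d_Z(C)$''), whereas you run the contrapositive (``if $\norm{c}<d_Z(C)$ then $c$ is trivial, hence $L$ is a stabilizer''); your version is in fact more explicit about the cleaning step and the role of the $D_{-1}$ checks, which the paper's proof leaves somewhat implicit when asserting $c\notin\im(\delta_0^C)$.
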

\begin{proof}

    Denote $c \oplus d \in C^1 \oplus D^0$ a non-trivial element of $H^1(\tot E_{\bullet})$.
    It is always true that $c \neq 0$, as $\ker (\delta_0^D) = \im (\delta_{-1}^D)$ and thus otherwise $d$ would then be a trivial element. With this guarantee, established, we can now assert that 
    \[
    c \in \ker(\delta_1^C) \setminus \im (\delta_0^C)
    \]
    In other words, $c$ is a non-trivial element of $H^1(C_\bullet)$, and inherits its distance. This gives us:
    \[
    \norm{c \oplus d} \geq \norm{c} \geq d_Z(C)
    \]
\end{proof}

\begin{lemma}
\label{lemma:lower-bound-surgery-distance}
    For any $s \in H_1(D_\bullet )$,  $d_s \geq d_Z(D)$.
\end{lemma}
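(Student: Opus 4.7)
The plan is to bound $d_s$ from below by showing that any putative error $\vec e = (e_1, e_2) \in \im H_X^{\tot E}$ with $s\cdot \vec e = 1$ produces a non-trivial $Z$-cocycle of $D^\bullet$ whose weight does not exceed $\norm{\vec e}$.

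First I would unpack $\vec e = (\partial_2^{\tot E})^T(q_0, q_1)$ to obtain $e_1 = (\partial_1^D)^T q_0 + \gamma_1^T q_1$ and $e_2 = H_X q_1$, and use $s \in \ker \partial_1^D$ to collapse $s \cdot \vec e$ to $(\gamma_1 s) \cdot q_1 = 1$. Applying $(\partial_2^D)^T$ to $e_1$ and invoking the chain-map identity $(\partial_2^D)^T \gamma_1^T = \gamma_2^T H_X$ then yields the meta-check relation $(\partial_2^D)^T e_1 = \gamma_2^T e_2$.

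If $\gamma_2^T e_2 = 0$, then $e_1$ is already a cocycle of $D^\bullet$ whose pairing with the cycle $s$ equals $1$; by the non-degeneracy of the mod-$2$ homology--cohomology pairing we conclude $[e_1] \neq 0$ in $H^1(D^\bullet)$, and hence $\norm{e_1} \geq d_Z(D)$ directly.

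The main obstacle is the case $\gamma_2^T e_2 \neq 0$, where $e_1$ is only an approximate cocycle whose defect $(\partial_2^D)^T e_1 = \gamma_2^T e_2$ has weight at most $\omega \norm{e_2}$ by $\omega$-boundedness of $\gamma_2^T$. I would invoke the (co)systolic expansion of $D$ at level $1$ to produce $z^\star \in \ker(\partial_2^D)^T$ with $\norm{e_1 + z^\star} \leq \norm{(\partial_2^D)^T e_1}/\epsilon \leq (\omega/\epsilon)\norm{e_2} \leq \norm{e_2}$, where the last step uses $\epsilon \geq \omega$. A short case split on $s \cdot z^\star$ then closes the argument: if $s \cdot z^\star = 0$ the cocycle $e_1 + z^\star$ still pairs to $1$ with $s$ and is therefore a non-trivial element of $H^1(D^\bullet)$, so $d_Z(D) \leq \norm{e_1 + z^\star} \leq \norm{e_2}$; if instead $s \cdot z^\star = 1$ then $z^\star$ itself is a non-trivial $Z$-cocycle, so the reverse triangle inequality gives $\norm{e_1} \geq \norm{z^\star} - \norm{e_1 + z^\star} \geq d_Z(D) - \norm{e_2}$. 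Either branch delivers $\norm{e_1} + \norm{e_2} \geq d_Z(D)$, as required. The delicate point worth revisiting is whether the expansion hypothesis stated for $D_1$ in Theorem~\ref{thm:distances-of-surgery} is strong enough to supply the approximating cocycle $z^\star$---that is, whether it controls the coboundary $(\partial_2^D)^T$ and not only the boundary $\partial_1^D$; should that fail, the natural remedy is to add the analogous cosystolic expansion on $D^1$ to the hypotheses.
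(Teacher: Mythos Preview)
Your argument is sound where it applies, but it is aimed at a stronger statement than the paper proves or needs. The paper's proof uses no expansion at all: it tacitly restricts to measurement errors supported on the $D_1$ register, i.e.\ $\vec e=(e_1,0)$. This is consistent with how the lemma is actually deployed in Theorem~\ref{thm:fault-distance-algorithm}, where the $C_2$ syndrome error is eliminated beforehand (``we know exactly what $e_{C_2}'$ is, and we can set it to be $0$''). With $e_2=0$ the meta-check relation you derived collapses to $(\partial_2^D)^T e_1=0$, so $e_1$ is a $1$-cocycle of $D^\bullet$; since $s\cdot e_1=1$ and $s$ is a $1$-cycle, $e_1\notin\im(\partial_1^D)^T$, whence $\|e_1\|\ge d_Z(D)$. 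That is exactly your case~(a), specialised to $e_2=0$, and it is the entire content of the paper's proof.

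Your case~(b) attempts to handle arbitrary $e_2$, which the paper does not. Your suspicion at the end is correct: the systolic expansion hypothesis in Theorem~\ref{thm:distances-of-surgery} controls $\partial_1^D$, not the coboundary $(\partial_2^D)^T$, so it does not supply the approximating cocycle $z^\star$. Thus case~(b) does not go through with the stated hypotheses --- but it is also not needed. In short, your case~(a) alone already reproduces the paper's argument; the machinery you introduce for case~(b) is addressing a more general question that the lemma, as intended and used, does not pose.
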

\begin{proof}
    First note that:
    \[
    d_s \geq \min_{v \in \ker{{\partial_2^D}^T} \setminus \im{{\partial_1^D}^T} \setminus \im {\gamma_1^T}} \norm{v}
    \]
    Because the measured syndrome commutes with all the elements of $\im{{\partial_1^D}^T} \cup \im {\gamma_1^T}$, and elements not in $\ker{{\partial_2^D}^T}$ are not in $\im H_X$. Note then:
    \[
    \min_{v \in \ker{{\partial_2^D}^T} \setminus \im{{\partial_1^D}^T} \setminus \im {\gamma_1^T}} \norm{v} \geq \min_{v \in \ker{{\partial_2^D}^T} \setminus \im{{\partial_1^D}^T} } \norm{v} = d_Z(D)
    \]
    Which concludes the proof.
\end{proof}

\subsection{Proof of fault-tolerance}
\label{subsec:fault-tolerance}

The surgery distance of Definition \ref{def:surgery-distance} characterises the resilience of $\tot(E)_\bullet$ to measurement-only errors.
However, in a realistic setting, errors might occur on both measurements \emph{and} qubits.
We accommodate this subtlety by generalising the surgery distance to the more general fault-distance.

\begin{definition}
The measurements of the $A_v$'s in Algorithm \ref{alg:single-shot-surgery} are performed on a qubit register affected by $n_{\text{qubit}}$ errors, and the outcome of $n_{\text{measurements}}$ are flipped.
The procedure is said to have \textbf{fault distance} $d$ if the smallest undetectable error that affects the output $\{\sigma_l\}_l$ of Algorithm \ref{alg:single-shot-surgery} has weight $n_{\text{qubit}} + n_{\text{measurements}} = d-1$.
\end{definition}

Algorithm \ref{alg:single-shot-surgery} gives a fault-tolerant procedure if we make the standard assumption that the syndromes in $C_1$ are sufficiently reliable both before and after performing the surgery procedure.
This leads to Theorem \ref{thm:fault-distance-algorithm} guaranteeing the fault-tolerance of Algorithm \ref{alg:single-shot-surgery}.

\begin{theorem}
    \label{thm:fault-distance-algorithm}
     Consider chain complexes $C_\bullet, D_\bullet$, and a chain homomorphism $\gamma_\bullet: D_\bullet \rightarrow C_\bullet$ such that the assumption \ref{prop:bounded-maps} of Theorem \ref{thm:distances-of-surgery} is satisfied. Then, the fault distance of the fault-tolerant application of Algorithm \ref{alg:single-shot-surgery} is lower bounded by $d_Z(D)/\omega$.
\end{theorem}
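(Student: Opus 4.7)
The plan is to reduce the fault-tolerance analysis to an effective-syndrome argument that extends Lemma~\ref{lemma:lower-bound-surgery-distance}, and then to convert the result into a bound on the combined qubit-plus-measurement weight through $\omega$-boundedness. Consider an arbitrary fault during Algorithm~\ref{alg:single-shot-surgery}, consisting of qubit errors $\vec{z} = (z_D, z_C) \in D_0 \oplus C_1$ and measurement-outcome flips $\vec{e}_m \in D_1$. A qubit error $\vec{z}$ anticommutes with the ancillary check $A_v = \mu_X((\partial_1^D + \gamma_1)v)$ with pattern $((\partial_1^D)^T z_D + \gamma_1^T z_C)(v)$, so the two contributions combine into an effective syndrome
\[
\vec{w} = \vec{e}_m + (\partial_1^D)^T z_D + \gamma_1^T z_C \in D_1.
\]
Writing $u_l = \sum_{v \in \fv_l} v \in D_1$ for the chosen representative of $h_l$, a short calculation gives $\sigma_l^{\text{flip}} = \vec{w} \cdot u_l$, so the logical-fault condition for $\sigma_l$ mirrors the condition $\vec{s}\cdot \vec{e} = 1$ of Definition~\ref{def:surgery-distance}, with $u_l$ in the role of $\vec{s}$ and $\vec{w}$ in the role of $\vec{e}$.

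Second, I would replay the argument of Lemma~\ref{lemma:lower-bound-surgery-distance} with $\vec{w}$ in place of the measurement-only error. The meta-checks of $\tot(E)_\bullet$ at $D_2$ are satisfied precisely when, after absorbing the assumed-reliable $C_2$ syndrome $H_X z_C$, the residual part of $\vec{w}$ lies in $\ker (\partial_2^D)^T$; components of $\vec{w}$ inside $\im (\partial_1^D)^T$ are $D$-coboundaries that pair trivially with the cycle $u_l$; and components inside $\im \gamma_1^T$ coming from $C_1$-qubit errors are caught by the reliable syndromes of $C_\bullet$ that bracket the surgery round. Following the same chain of inclusions as in Lemma~\ref{lemma:lower-bound-surgery-distance}, any $\vec{w}$ that simultaneously evades detection and flips $\sigma_l$ must lie in $\ker (\partial_2^D)^T \setminus \im (\partial_1^D)^T$, which forces $\norm{\vec{w}} \geq d_Z(D)$.

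Third, the $\omega$-boundedness of $\partial_1^D$ and $\gamma_1$ (and their transposes) gives the triangle-inequality estimate
\[
\norm{\vec{w}} \leq \norm{\vec{e}_m} + \norm{(\partial_1^D)^T z_D} + \norm{\gamma_1^T z_C} \leq \norm{\vec{e}_m} + \omega\,\norm{\vec{z}},
\]
so combining with $\norm{\vec{w}} \geq d_Z(D)$ produces $n_{\text{measurements}} + \omega\,n_{\text{qubit}} \geq d_Z(D)$. Since $\omega \geq 1$ in any non-trivial complex, dividing yields $n_{\text{qubit}} + n_{\text{measurements}} \geq d_Z(D)/\omega$, which is the claimed lower bound on the fault distance.

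The main obstacle I anticipate is the second step: carefully tracking how the two independent sources of detection -- the $D_2$ meta-checks and the reliable $C_2$ syndromes -- combine to constrain $\vec{w}$. The meta-check pattern naively depends only on $\vec{e}_m$ and not on $\vec{w}$, and one has to use the chain-map identity $\gamma_1 \partial_2^D = \partial_2^C \gamma_2$ to show that the $\gamma_1^T z_C$ contribution to $\vec{w}$ is exactly compensated by the corresponding $C_2$ syndrome, so that the effective undetectability condition on $\vec{w}$ coincides with the one analysed in Lemma~\ref{lemma:lower-bound-surgery-distance}; the remaining bookkeeping around $z_C$-components that happen to project to genuine $Z$-logicals of $C_\bullet$ is harmless, since such components already cost weight at least $d_Z(C)$ and therefore lie above the threshold $d_Z(D)/\omega$ in any regime where the statement is non-vacuous.
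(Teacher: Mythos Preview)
Your proposal is correct and follows essentially the same route as the paper: you push the qubit errors onto the $D_1$ syndrome data to form the effective error $\vec{w}=\vec{e}_m+(\partial_1^D)^T z_D+\gamma_1^T z_C$ (the paper's $e_{D_1}'$), invoke the reliable pre/post $C_2$ syndromes to neutralize the $C_1$-induced contribution, apply Lemma~\ref{lemma:lower-bound-surgery-distance} to force $\norm{\vec{w}}\geq d_Z(D)$, and then convert via $\omega$-boundedness. The only cosmetic difference is that the paper also carries an $e_{C_2}$ term before zeroing it out via the perfect-$C_2$ assumption, whereas you fold that assumption in from the start.
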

\begin{proof}
    Let $v_{C_2} \in C_2, v_{D_1} \in D_1$ be the observed syndrome measurements. They each are corrupted by some error vectors:

    \begin{equation}
        v_{C_2} = v_{C_2}' + e_{C_2} , \quad  v_{D_1}= v_{D_1}' + e_{D_1}
    \end{equation}
    Similarly, the qubits, at the time they were measured, were also affected by some errors, which we write $e_{D_0} \in D_0, e_{C_1} \in C_1$.
    The total error $e_{\text{tot}} \equiv e_{D_1} + e_{D_0} + e_{C_2} + e_{C_1} $ leads to the same measured syndrome as the following equivalent error (we push the qubit errors onto the syndrome data):
    \[
    e_{\text{tot}}' = (e_{D_1} + {\partial_1^D}^T e_{D_0} + \gamma_1^T e_{C_1} ) + (e_{C_2} + {\partial_2^C}^T e_{C_1}) \equiv e_{D_1}' + e_{C_2}'
    \]
    Where $\norm{e_{\text{tot}}'} \leq e_{D_1} + \omega e_{D_0} + e_{C_2} + \omega e_{C_1} + \omega e_{C_1}$ due to the $\omega$-boundedness of the respective maps.
    Because we have perfect measurement of $C_2$ pre- and post- surgery, we know exactly what $e_{C_2}'$ is, and we can set it to be $0$.
    
    Finally, Lemma \ref{lemma:lower-bound-surgery-distance} and Proposition \ref{prop:surgery-distance} guarantee that as long as $\norm{e_{D_1}'} < d_Z(D)$, we have $\norm{e_{D_1}'} < d_s$ and no undetectable error has corrupted the output of Algorithm \ref{alg:single-shot-surgery}. The condition $\norm{e_{D_1}'} < d_Z(D)$ always holds as long as $\norm{e_{D_1}} + \norm{e_{D_0}} + \norm{e_{C_1}} < d_Z(D)/\omega $. We conclude that the fault distance is at least $ d_Z(D)/\omega$.
\end{proof}

\paragraph*{Remark:} In the case where the base code $C_\bullet$ is presented with a sufficiently large meta check distance of its own, the theorem above can be adapted to show that $r$ rounds of syndrome measurement amplifies the fault distance to $r \cdot d_Z(D)/\omega$.

\section{Constructing homomorphic complexes} \label{sec:constructing}

\subsection{Expansion boosting with the repetition code} \label{sec:boost}

In this section we describe how, starting with a $D_\bullet$ complex with low or no expansion, we can obtain a new complex with sufficiently boosted expansion that will guarantee the resulting $d_X(\tot E)$ is large.
The key observation is that the property $\norm{\partial^Du'} \geq \epsilon\norm{u'}$ used in Lemma \ref{lemma:dz-distance} can be relaxed to $\norm{\partial^Du'} \geq \epsilon\norm{\gamma_1(u')}$. 
This can be achieved simply by producting $D_\bullet$ with a repetition code of length $d_X(C)$.
The notation we use assumes that $D_\bullet$ exists in the context of the following surgery scheme -- we ignore $D_{-1}$ as it can always be fixed ad-hoc:

\begin{equation}
\begin{tikzcd}[cells={nodes={minimum height=2em}}]
    D_2 \arrow["\partial_2^D"]{r} \arrow[d, "\gamma_2"] &  D_1 \arrow["\partial_1^D"]{r} \arrow[d,"\gamma_1"] &  D_0 \arrow[d,"\gamma_0"] \\
    C_2 \arrow["\partial_2^C"]{r} & C_1 \arrow["\partial_1^C"]{r} & C_0
\end{tikzcd}
\end{equation}

We denote by $R_1 \xrightarrow{\partial_1^R} R_0$ a repetition code of length $l$ -- and we will later fix $l =d_X(C)$.
The boundary map $\partial_1^R$ is picked such that $H_0(R_\bullet) = 0$, i.e. $R_0 = \im \partial_1^R$ \footnote{In this case $\partial_1^R$ is simply the `usual' repetition code; with checks $Z_1Z_2, Z_2Z_3, \dots, Z_{l-1}Z_l$.}.
We adopt the convention where $R_1$ refers to the bits, and $R_0$ to the checks.
As previously hinted at, the new ancilla complex is now $D^{\otimes l}_\bullet \equiv (D\otimes R)_\bullet$, which corresponds to the following complex:

\begin{equation}
    (D\otimes R)_\bullet = \begin{tikzcd}
                                              &                                          &  D_0 \otimes R_1 \arrow["\idty \otimes \partial_1^R"]{rd} &                 \\
                                              &   D_1 \otimes R_1 \arrow["\partial_1^D \otimes \idty"]{ru} \arrow["\idty \otimes \partial_1^R"]{rd}  &                             & D_0 \otimes R_0\\     
        D_2 \otimes R_1 \arrow["\partial_2^D \otimes \idty"]{ru} \arrow["\idty \otimes \partial_1^R"]{rd} &                                          &  D_1 \otimes R_0 \arrow["\partial_1^D \otimes \idty"]{ru} &                 \\
                                              &  D_2 \otimes R_0  \arrow["\partial_2^D \otimes \idty"]{ru}             &\\
    \end{tikzcd}
\end{equation}

This product complex has boundary operator $\partial^{D\otimes R} = \partial^D\otimes \idty + \idty \otimes \partial^R$, and its modules are denoted as follows:

\begin{align*}
    (D\otimes R)_2 &= D_2 \otimes R_1, \quad \\
    (D\otimes R)_1 &= D_1 \otimes R_1 \bigoplus D_2 \otimes R_0, \quad \\
    (D\otimes R)_0 &= D_0 \otimes R_1 \bigoplus D_1 \otimes R_0, \quad \\
    (D\otimes R)_{-1} &= D_0 \otimes R_0, \quad
\end{align*}

With this new ancillary complex, we can update the surgery scheme as follows:

\begin{equation}
\label{eq:repetition-new-ancilla}
\begin{tikzcd}[cells={nodes={minimum height=2em}}]
    (D\otimes R)_2 \arrow["\partial_2^{D\otimes R}"]{r} \arrow[d, "\gamma'_2"] &  (D\otimes R)_1 \arrow["\partial_1^{D\otimes R}"]{r} \arrow[d,"\gamma'_1"] &  (D\otimes R)_0 \arrow[d,"\gamma'_0"] \\
    C_2 \arrow["\partial_2^C"]{r} & C_1 \arrow["\partial_1^C"]{r} & C_0
\end{tikzcd}
\end{equation}

To define the new homomorphisms $\gamma'_\bullet: (D\otimes R)_\bullet\to C_\bullet$ we fix a distinguished bit $r_\star \in R_1$ at one of the two endpoints of the repetition code and define a projection $\pi: R_1 \rightarrow \F_2$ by $\pi(r_\star) = 1$ and $\pi(r) = 0$ for any basis bit $r \neq r_\star$.
Meanwhile, $\gamma'_\bullet$ sends $D_\bullet \otimes R_0$ to $0$. With these conventions we can make the definition of the chain precise:
\begin{align*}
  &\gamma'_2(d\otimes r)=\pi(r)\,\gamma_2(d)\quad\text{for }d\in D_2,\ r\in R_1, \\
  &\gamma'_1(d\otimes r)=\pi(r)\,\gamma_1(d)\quad\text{for }d\in D_1,\ r\in R_1,\quad\text{and }\gamma'_2(d\otimes r)=0\text{ for } d \in D_2, r\in R_0,\\
  &\gamma'_0(d\otimes r)=\pi(r)\,\gamma_0(d)\quad\text{for }d\in D_0,\ r\in R_1,\quad\text{and }\gamma'_1(d\otimes r)=0\text{ for } d \in D_1, r\in R_0.\\
\end{align*}

\begin{lemma}
    \label{lemma:product-expansion}
    Let $u \in (D\otimes R)_1$, and let $u'=u+z$ be a representative of the coset $u+Z_1(D\otimes R)$ that minimises $\norm{\gamma_1(u')}$. Then
    \[
    \norm{\partial u'} \geq \min (\norm{\gamma'_1(u')}/\omega, l)
    \]
\end{lemma}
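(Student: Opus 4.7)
The plan is to unpack $u' \in (D\otimes R)_1 = (D_1 \otimes R_1) \oplus (D_2 \otimes R_0)$ as $u' = \sum_i \alpha'_i \otimes r_i + \sum_j \beta'_j \otimes f_j$, where $\{r_i\}$ is the bit basis of $R_1$ and $\{f_j\}$ the check basis of $R_0$. A direct computation of $\partial^{D\otimes R} u'$ separates it into a $D_0 \otimes R_1$ piece with coefficients $c_i := \partial_1^D(\alpha'_i)$ and a $D_1 \otimes R_0$ piece with coefficients $A_j := \alpha'_j + \alpha'_{j+1} + \partial_2^D(\beta'_j)$, so that
\begin{equation}
\norm{\partial u'} = \sum_i \norm{c_i} + \sum_j \norm{A_j} ,
\end{equation}
while $\gamma'_1(u') = \gamma_1(\alpha'_\star)$. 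Write $w := \norm{\gamma'_1(u')}$ and assume $w > 0$, else the bound is trivial.

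First I would convert the minimality of $u'$ into a pointwise statement about $\alpha'_\star$. For any $\zeta \in Z_1(D)$ the tensor $\zeta \otimes \vec{1}$ lies in $Z_1(D\otimes R)$ (since $\vec{1} \in \ker \partial_1^R$), so $u' + \zeta \otimes \vec{1}$ is another representative of the coset whose $\star$-component is $\alpha'_\star + \zeta$. Minimality therefore gives $\norm{\gamma_1(\alpha'_\star + \zeta)} \geq w$, and $\omega$-boundedness of $\gamma_1$ upgrades this to
\begin{equation}
\norm{\alpha'_\star + \zeta} \geq w/\omega \qquad \text{for every } \zeta \in Z_1(D) ,
\end{equation}
which in particular forces $\alpha'_\star \notin Z_1(D)$, so $c_\star \neq 0$.

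I would then case split. If $\alpha'_i \notin Z_1(D)$ for every $i \in \{1,\dots,l\}$, then each $\norm{c_i} \geq 1$ and already $\norm{\partial u'} \geq \sum_i \norm{c_i} \geq l$. Otherwise fix $k$ with $\alpha'_k \in Z_1(D)$ and let $P$ be the set of edges of $R$ along the path from $\star$ to $k$. Telescoping $A_j = \alpha'_j + \alpha'_{j+1} + \partial_2^D(\beta'_j)$ along $P$ yields
\begin{equation}
\sum_{j \in P} A_j = \alpha'_\star + \alpha'_k + \partial_2^D\!\left( \sum_{j \in P} \beta'_j \right) .
\end{equation}
Because $\alpha'_k \in Z_1(D)$ and $\im \partial_2^D \subseteq Z_1(D)$, the element $\zeta := \alpha'_k + \partial_2^D(\sum_{j \in P} \beta'_j)$ lies in $Z_1(D)$, so the minimality bound gives $\norm{\sum_{j \in P} A_j} = \norm{\alpha'_\star + \zeta} \geq w/\omega$. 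The triangle inequality then implies $\sum_j \norm{A_j} \geq \norm{\sum_{j \in P} A_j} \geq w/\omega$, and combining the two cases yields $\norm{\partial u'} \geq \min(w/\omega, l)$. The main delicate point is the $\partial_2^D$ correction generated by telescoping, whose weight is a priori uncontrolled; this is dissolved by the observation that $\im \partial_2^D \subseteq Z_1(D)$, which lets us absorb the correction into $\zeta$ and invoke the minimality bound without any further weight estimate.
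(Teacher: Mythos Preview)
Your proof is correct and takes a somewhat different route from the paper's. The paper first cleans $u'$ by adding boundaries to kill the $D_2\otimes R_0$ component, obtaining $u''\in D_1\otimes R_1$ with the same $\partial$ and $\gamma'_1$; it then case-splits on whether $\Delta:=\max_j\|d_{j_\star}-d_j\|$ is below or above $\|d_{j_\star}\|$, using a telescoping bound on the $D_1\otimes R_0$ part in the second case and plain $\omega$-boundedness of $\gamma_1$ at the end. You instead keep the $D_2\otimes R_0$ piece throughout, case-split on whether every $\alpha'_i$ lies outside $Z_1(D)$, and crucially invoke the minimality hypothesis explicitly: testing against cycles $\zeta\otimes\vec{1}$ yields $\|\alpha'_\star+\zeta\|\ge w/\omega$ for all $\zeta\in Z_1(D)$, and you then absorb both the telescoping endpoint $\alpha'_k$ and the uncontrolled $\partial_2^D$-correction into a single $\zeta$. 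What this buys you is a tighter first branch---the paper's condition $\Delta<\|d_{j_\star}\|$ only gives $d_j\neq 0$, not $d_j\notin Z_1(D)$, so its jump to $\|\partial_1^D d_j\|\ge 1$ is a bit quick---whereas the paper's preliminary cleaning makes the telescoping step itself slightly simpler since no $\partial_2^D$ terms appear.
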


\begin{proof}
  \textit{Step 1:} We start by eliminating the $D_2\otimes R_0$ component of $u'$ without changing $\partial u'$ or $\gamma'_1(u')$.
  Write $u'$ as $u'=u'_{11}\oplus u'_{20}$ with $u'_{11}\in D_1\otimes R_1$ and $u'_{20}\in D_2\otimes R_0$. Because $R_0=\operatorname{im}\partial_1^R$, for each element $d_2\otimes r_0$ in $u'_{20}$ there exists $r_1\in R_1$ with $\partial_1^R r_1=r_0$. 
  Further, since the all-ones vector belongs to $\ker\partial_1^R$, it is possible to pick $r_1$ such that $\pi(r_1)=0$ -- otherwise we can `clean' the $r_\star$ component by considering $r_1 \rightarrow r_1 + \mathbf{1}$. We then set
  \[
    u'' \equiv u' + \partial(d_2\otimes r_1) = u' + (\partial_1^D d_2\otimes r_1) + (d_2\otimes \partial_1^R r_1)
  \]
  This cancels the $d_2\otimes r_0$ term in $u'_{20}$, so after applying this to every element of $u'_{20}$ we obtain a representative (still denoted $u''$) with no $D_2\otimes R_0$ component, i.e. $u''\in D_1\otimes R_1$.
  Moreover, as $\partial(d_2\otimes r_1)$ is a stabilizer then $\partial u''=\partial u'$; and by construction $\pi(r_1)=0$ gives $\gamma'_1\big(\partial_1^D d_2\otimes r_1\big)=0$, hence $\gamma'_1(u'')=\gamma'_1(u')$.

  \textit{Step 2:} We now proceed to lower bound $\norm{\partial u''}$ for $u''\in D_1\otimes R_1$.
  Write $u''=\sum_{j=1}^l d_j\otimes e_j$, where $\{e_j\}_j$ is the standard basis of $R_1$ (bits along the length-$l$ repetition code).
  We denote by $j_\star$ the index of the distinguished bit, i.e. $\pi(e_{j_\star}) = 1$, and $\gamma'_1(u'') = \gamma_1(d_{j_\star})$ -- note that we can always assume $\norm{d_{j_\star}} > 0$, otherwise the lemma is immediately proved. Then
  \[
    \partial u'' 
    = (\partial_1^D\otimes\mathrm{id})(u'') + (\mathrm{id}\otimes\partial_1^R)(u'')
    = \sum_{j} \partial_1^D d_j\otimes e_j \oplus \sum_{j} d_j\otimes \partial_1^R e_j
  \]
  Write $\Delta = \max_j \norm{d_{j_\star} - d_j}$. At this point the rest of the argument can be broken down into two cases:
  \begin{enumerate}
      \item $\Delta < \norm{d_{j_\star}}$. In which case, $\norm{\partial_1^D d_j} \geq 1$ for every $e_j$. This yields $\norm{\partial u''} \geq \norm{\sum_j \partial_1^D d_j\otimes e_j} \geq 1 * l$
      \item $\Delta \geq \norm{d_{j_\star}}$. In which case, $\norm{\partial u''} \geq \norm{\sum_{j} d_j\otimes \partial_1^R e_j} = \sum_{j=1}^{l-1} \norm{d_j - d_{j+1}} \geq \Delta \geq \norm{d_{j_\star}}$.
  \end{enumerate}
  By the $\omega$-boundedness of $\gamma_1$, we have:
  \[
    \norm{\partial u'} = \norm{\partial u''} \geq  \min ( \norm{\gamma'_1(u'')}/\omega,\ l )
    = \min ( \norm{\gamma'_1(u')}/\omega,\ l )
  \]
\end{proof}

\begin{theorem}
    \label{thm:distances-of-surgery-repetition}
    Let $C_\bullet, D_\bullet$ be chain complexes, and $\gamma_\bullet$ a chain morphism from $D_\bullet$ to $C_\bullet$. We assume these objects satisfy the following:
    \begin{enumerate}
        \item  ${\partial_1^D}, {\partial_2^C}, {\partial_1^C}, \gamma_1$ and their transpose are $\omega$-bounded, for some $\omega$
    \end{enumerate}
   Then the new homomorphisms described in Equation \ref{eq:repetition-new-ancilla} with $l=d_X(C)$ yield $\tot E$ satisfying the following properties:
    \begin{enumerate}
        \item \label{property:repetition-0} $\partial_1^{D\otimes R}$ and its transpose is $2\omega$-bounded; $\gamma_1'$ and its transpose are $\omega$-bounded.
        \item \label{property:repetition-1} $d_X(\tot E) \geq \frac{1}{\omega} d_X(C)$
        \item \label{property:repetition-2} $d_Z(\tot E) \geq d_Z(C)$
        \item \label{property:repetition-3} For any $s \in H_1(D_\bullet )$, $d_s \geq d_Z(D)$ 
    \end{enumerate}
\end{theorem}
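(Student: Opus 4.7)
My plan is to dispatch the four properties in order: Property \ref{property:repetition-0} is a direct weight count, while Properties \ref{property:repetition-1}--\ref{property:repetition-3} transcribe the proofs of Lemmas \ref{lemma:dx-distance}, \ref{lemma:dz-distance}, and \ref{lemma:lower-bound-surgery-distance} with Lemma \ref{lemma:product-expansion} (and small Künneth inputs) replacing the systolic-expansion hypothesis where required.

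For Property \ref{property:repetition-0}, I would unpack $\partial_1^{D\otimes R} = \partial^D \otimes \mathrm{id} + \mathrm{id} \otimes \partial^R$. On any tensor-basis element the two summands land in disjoint blocks of $(D\otimes R)_0$, so the column weight (and by symmetry the row weight) is at most $\omega + 2 \leq 2\omega$, using $\omega \geq 2$ which is automatic for any non-trivial LDPC complex. The map $\gamma_1'$ factors $\gamma_1$ through the projection $\pi$, so it inherits the $\omega$-bound of $\gamma_1$.

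For Property \ref{property:repetition-1}, I would mirror Lemma \ref{lemma:dx-distance} with Lemma \ref{lemma:product-expansion} replacing systolic expansion. Given a representative $L + \partial_2^C c + \gamma_1' u$ of a non-trivial $X$-logical of $\tot E$, I would pick $u' = u + z$ in the coset $u + Z_1(D\otimes R)$ minimising $\norm{\gamma_1' u'}$. Because $\gamma'_\bullet$ is a chain map, $\gamma_1' z$ is a cycle in $C$ whose $H_1(C)$-class lies in the measured subspace $\gamma_1 H_1(D)$; therefore $w' \equiv L + \gamma_1' z + \partial_2^C c$ represents the same non-trivial class as $L$ in $H_1(C)/\gamma_1 H_1(D)$, hence also in $H_1(C)$, so $\norm{w'} \geq d_X(C)$. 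Combining with Lemma \ref{lemma:product-expansion} yields
\[
    \norm{L + \partial_2^C c + \gamma_1' u} + \norm{\partial u} \geq \norm{w' + \gamma_1' u'} + \min\bigl(\norm{\gamma_1' u'}/\omega,\; l\bigr).
\]
If the minimum is realised by $\norm{\gamma_1' u'}/\omega$, the reverse triangle inequality $\norm{\gamma_1' u'} \geq \norm{w'} - \norm{w' + \gamma_1' u'}$ rearranges the right-hand side into $(1 - 1/\omega)\norm{w' + \gamma_1' u'} + \norm{w'}/\omega \geq d_X(C)/\omega$; if it is realised by $l = d_X(C)$, the bound $\geq d_X(C) \geq d_X(C)/\omega$ is immediate.

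Properties \ref{property:repetition-2} and \ref{property:repetition-3} both reduce to elementary Künneth computations on $D \otimes R$. For Property \ref{property:repetition-2}, the argument of Lemma \ref{lemma:dz-distance} applies once one verifies $H_0(D\otimes R) = 0$, which follows from $H_0(D) = 0$ (by the $D_{-1}$ extension from Section~\ref{subsec:correctness}) and $H_0(R) = 0$ (by surjectivity of $\partial_1^R$) via Künneth over $\mathbb{F}_2$. For Property \ref{property:repetition-3}, Lemma \ref{lemma:lower-bound-surgery-distance} gives $d_s \geq d_Z(D\otimes R)$, and Künneth together with the fact that $H^1(R) = \mathbb{F}_2$ admits a weight-$1$ representative (any single bit of the repetition code) yields $d_Z(D\otimes R) \geq d_Z(D)$. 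The main obstacle I anticipate is the case analysis in Property \ref{property:repetition-1}: the chain-map bookkeeping that absorbs the cycle shift $u \mapsto u'$ into $w'$ must be tracked carefully so that both regimes of Lemma \ref{lemma:product-expansion} produce the same bound $d_X(C)/\omega$.
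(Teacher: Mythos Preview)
Your proposal is correct and matches the paper's own proof essentially step for step: Property~\ref{property:repetition-0} by a direct weight count on $\partial^D\otimes\mathrm{id}+\mathrm{id}\otimes\partial^R$ and on $\gamma'_1=\gamma_1\circ\pi$; Properties~\ref{property:repetition-2} and~\ref{property:repetition-3} by invoking Lemmas~\ref{lemma:dz-distance} and~\ref{lemma:lower-bound-surgery-distance} together with the K\"unneth/hypergraph-product distance facts (the paper simply cites \cite{tillich2014hypergraph,zeng2019higher} where you spell out the K\"unneth input); and Property~\ref{property:repetition-1} via exactly the same case split on Lemma~\ref{lemma:product-expansion}, absorbing the cycle shift $u\mapsto u'$ into $w'=w+\gamma'_1(u-u')$ and then applying the reverse triangle inequality in the $\norm{\gamma'_1 u'}/\omega$ branch. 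The bookkeeping you flag as the main obstacle is handled identically in the paper.
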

\begin{proof}
    The boundedness of $\partial_1^{D\otimes R},\gamma_1'$ and their transpose is immediate from their definition.
    Properties \ref{property:repetition-2} and \ref{property:repetition-3} follow from Lemmas \ref{lemma:dz-distance} and \ref{lemma:lower-bound-surgery-distance} respectively, and the properties of the hypergraph product \cite{tillich2014hypergraph,zeng2019higher}. 
    
    Property \ref{property:repetition-1} follows from Lemma \ref{lemma:product-expansion}, in a way much similar to that of the proof of Lemma \ref{lemma:dx-distance}.
    Denote $\cL' \subset C_1\oplus (D\otimes R)_0$ the space spanned by the basis given by Lemma \ref{lem:homology-basis}. We can then compute the distance as:
    \[
    d_X( \tot E) = \min_{L \in \cL' \setminus 0} \norm{L + \im (\partial_2^C) + \im (\gamma'_1+\partial_1^{D\otimes R})}
    \]
    We will describe to what extent can $w \in \cL' \setminus 0 + \im (\partial_2^C)$ see its weight be reduced by the stabilizers in $\tot (E)_2$. First, from Lemma \ref{lemma:product-expansion}
     \begin{align*}
        \forall u \not\in \ker(\partial_1^{D\otimes R}), \quad \norm{w + (\gamma'_1+\partial_1^{D\otimes R})u} &\geq  \norm{w + \gamma'_1u} + \norm{\partial_1^{D\otimes R} u}
    \end{align*}
    In the case $\norm{\gamma'_1(u')}/\omega \geq l$, we have $\norm{w + (\gamma'_1+\partial_1^{D\otimes R})u} \geq \norm{w + \gamma'_1u} + l \geq l = d_X(C)$, and the conclusion follows immediately. We thus focus on the opposite situation. Write 
    \begin{align*}
        \norm{w + (\gamma_1+\partial_1^{D\otimes R})u} &\geq  \norm{w + \gamma'_1u} + \frac{1}{\omega}\norm{\gamma'_1(u')} \\
        & = \norm{w + \gamma'_1(u + u' - u')} + \frac{1}{\omega}\norm{\gamma'_1(u')} \\
        &= \norm{(w + \gamma'_1(u - u')) + \gamma_1(u')} + \frac{1}{\omega}\norm{\gamma'_1(u')}  \\
        &\equiv \norm{ w' + \gamma'_1(u')} + \frac{1}{\omega}\norm{\gamma'_1(u')} \\
    \end{align*}
    From the reverse triangle inequality, $\norm{\gamma'_1(u')} \geq \norm{w'} - \norm{w' + \gamma'_1 u'}$, the expression then becomes:
    \begin{align*}
       \norm{w + (\gamma_1+\partial_1^{D\otimes R})u} &\geq \norm{w' + \gamma'_1(u') } + \frac{\norm{w'} - \norm{w' + \gamma'_1 u'}}{\omega}\\
       & = \frac{1}{\omega}\norm{w'} + (1-\frac{1}{\omega})\norm{w' + \gamma_1 u'} \\
       & \geq \frac{1}{\omega}\norm{w'} \geq \frac{1}{\omega} d_X(C)
   \end{align*}
\end{proof}

\paragraph{Remark:} This construction yields a time-optimal extremum that contrasts well with the space-optimal extremum presented in \cite{Williamson2024}.

\subsection{Constructing $D_\bullet$}

We now describe a method for creating an ancilla complex $D_{\bullet}$, encoding a single logical qubit that is homomorphic to $C_{\bullet}$ for an arbitrary CSS QLDPC code $C_{\bullet}$.

A straightforward way to obtain a suitable complex ancilla complex $D_\bullet$ for any given code $C_\bullet$ is to construct an ancilla system by using ideas from the parallel logical measurement scheme of~\cite{cowtan2025parallel}, which combines brute-force branching and the measurement scheme from Refs.~\cite{Williamson2024,Ide2025}.
\begin{lemma}[Lemma 2.5 in \cite{cowtan2025parallel}]\label{lem:bfb-no-new-logicals}
    Brute-force branching creates new representatives on each leaf for each logical support $v_i \in \mathcal{I}$ and introduces no new logical operators.
\end{lemma}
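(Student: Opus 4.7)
The plan is to first make precise what brute-force branching does at the level of chain complexes. Given the base code $C_\bullet$ and the family $\mathcal{I}=\{v_i\}$ of overlapping logical supports, the procedure duplicates every qubit that appears in more than one $v_i$ along a tree whose leaves are indexed by the logicals using that qubit, and introduces a $Z$-type ``equality'' stabilizer for each edge of the tree that forces the two incident copies to agree. Write $\widetilde C_\bullet$ for the resulting deformed complex, and let $\pi : \widetilde C_\bullet \to C_\bullet$ be the projection that sends every copy of a qubit back to the original and sends each equality stabilizer to zero.

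For the first claim, I would fix $v_i \in \mathcal{I}$ and start from its original representative in $C_\bullet$. Using the equality stabilizers, I can transport the support of $v_i$ from each original qubit to the specific copy sitting at the leaf assigned to $v_i$; because transport is effected purely by stabilizers the resulting vector $\widetilde v_i$ lies in the same homology class as $v_i$, and because distinct logicals are assigned to distinct leaves the new representatives $\{\widetilde v_i\}$ have pairwise disjoint supports, as required.

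For the second claim, the approach is to exhibit $\pi$ as a degree-one quasi-isomorphism. The projection $\pi$ is a chain map by construction, and the key step is to show that its kernel subcomplex is acyclic in degree one: any cycle in the kernel is an $\mathbb{F}_2$-combination of ``differences of copies'' of the same qubit, and each such difference is the boundary of the appropriate sum of equality stabilizers along the unique path in the branching tree connecting the two copies. It then follows that $H_1(\widetilde C_\bullet) \cong H_1(C_\bullet)$, and the dual argument, exchanging the roles of $X$ and $Z$, gives $H^1(\widetilde C_\bullet) \cong H^1(C_\bullet)$, so branching introduces neither new $X$- nor new $Z$-type logicals.

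The main obstacle I expect is the bookkeeping needed to establish acyclicity of the kernel subcomplex when many qubits are branched simultaneously and their branching trees interact through checks inherited from the original code. The single-qubit case reduces to the standard fact that a tree-shaped repetition code has trivial first homology, but the general case requires verifying that independently contractible trees attached at their roots do not conspire to produce an unexpected cycle; I would handle this by processing qubits one at a time and arguing inductively that each branching step is a chain-homotopy equivalence, so that the composite retains that property.
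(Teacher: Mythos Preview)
The paper does not supply its own proof of this lemma: it is quoted as Lemma~2.5 of Ref.~\cite{cowtan2025parallel} and invoked as a black-box input to the construction in Section~\ref{sec:constructing}. There is therefore nothing in this paper to compare your proposal against.

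That said, your outline is a sound way to establish the statement. Packaging brute-force branching as a chain map $\pi:\widetilde C_\bullet\to C_\bullet$ and reducing ``no new logicals'' to acyclicity of $\ker\pi$ is the natural homological argument, and your observation that the kernel decomposes into tree-shaped repetition complexes (one per branched qubit) is exactly what makes the acyclicity check tractable. Two small remarks. First, your ``dual argument'' for $H^1$ is not literally symmetric, since the equality stabilizers are all of one Pauli type; however, once you have $H_1(\widetilde C_\bullet)\cong H_1(C_\bullet)$ you get $\dim H^1(\widetilde C_\bullet)=\dim H^1(C_\bullet)$ for free, and the obvious inclusion of $Z$-logicals (replicate across copies) then forces the cohomology isomorphism without a separate argument. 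Second, the inductive ``one qubit at a time'' strategy you propose for the bookkeeping is adequate precisely because the branching trees attached at distinct qubits do not share any new cells, so the kernel subcomplex is a direct sum of the individual tree complexes and no interaction between trees can occur.
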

\begin{lemma}[Lemma 2.6 in \cite{cowtan2025parallel}, Thm 2 in \cite{zhang2025time}]\label{lem:bfb-distance-pres}
    Brute-force branching preserves the code distance.
\end{lemma}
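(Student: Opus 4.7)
My plan is to unpack what brute-force branching does structurally and then argue distance preservation for $X$- and $Z$-type logicals separately, using Lemma \ref{lem:bfb-no-new-logicals} as the final ingredient to guarantee that any distance-saturating logical of the branched code comes from a genuine logical of $C$.

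I would begin by recalling the construction: for each logical support $v_i\in\mathcal{I}$, brute-force branching attaches a tree-shaped ancilla module whose data qubits are duplicates of the physical qubits in the support, glued to $C$ via weight-$2$ equality $X$-stabilizers along the tree edges. The branched code thus embeds $C$ as a subcode, with every new stabilizer being an equality check between parent and child copies. This embedding makes it easy to compare weights in the branched code with weights in $C$.

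For $d_Z$, any nontrivial $\bar{Z}$ in the branched code must commute with every equality stabilizer, hence must take equal values on the two endpoints of each tree edge. Connectivity of each branching tree then forces $\bar{Z}$ to be constant on each connected subtree. I can use the equality stabilizers to slide every nontrivial component back to its root in $C$, yielding an equivalent representative supported entirely on the original qubits, of weight no larger than $\norm{\bar{Z}}$. By Lemma \ref{lem:bfb-no-new-logicals} this representative is a logical of $C$ and therefore of weight at least $d_Z(C)$, giving $\norm{\bar{Z}}\geq d_Z(C)$. For $d_X$, I would dualize the argument by cleaning rather than sliding: starting from the leaves of each tree and working inward, multiply $\bar{X}$ by tree equality stabilizers to remove duplicated-qubit occurrences from the support. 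Each step removes exactly one duplicate and toggles at most one parent, so the total weight is non-increasing; the cleaned representative lies in $C$ and, again by Lemma \ref{lem:bfb-no-new-logicals}, has weight at least $d_X(C)$.

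The main obstacle I anticipate is the bookkeeping when multiple branching trees overlap on shared original qubits: naively, cleaning one tree could reintroduce weight that was cleaned from another. The standard resolution is that the ancilla qubits attached for different $v_i$ form disjoint new registers and the equality stabilizers live on disjoint edge sets, so cleaning on distinct trees acts on disjoint qubit supports and can be performed independently. Only the tree roots sit in the base block of $C$, and any collisions there can be absorbed into the final $C$-representative without increasing weight. Making this disjointness argument airtight, in a way that is faithful to the definition of brute-force branching in \cite{cowtan2025parallel,zhang2025time}, is the technical crux I would expect to spend the most care on.
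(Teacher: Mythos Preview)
The paper does not supply its own proof of this lemma: it is stated as a quotation of Lemma~2.6 in \cite{cowtan2025parallel} and Theorem~2 in \cite{zhang2025time}, with no argument given here. There is therefore no in-paper proof to compare your proposal against; the authors simply invoke the cited results.

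As a standalone sketch your strategy is the standard cleaning argument and is broadly sound, but one step in the $d_Z$ direction is misphrased. You say you will ``use the equality stabilizers to slide every nontrivial component back to its root,'' yet the equality stabilizers you introduced are $X$-type and cannot be multiplied into a $Z$-operator. What the constancy-along-trees observation actually buys you is that the \emph{restriction} of $\bar Z$ to the original qubit block has weight at most $\norm{\bar Z}$ (each tree on which $\bar Z$ is nonzero contributes its root to the restriction and at least that root to $\norm{\bar Z}$); one then checks directly that this restriction lies in $\ker H_X$ of $C$ and is nontrivial there, the latter being exactly where Lemma~\ref{lem:bfb-no-new-logicals} (or an equivalent direct argument using the extended $Z$-stabilizers of the branched code) is used. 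No stabilizer multiplication is needed for this direction. Your $d_X$ argument---leaf-to-root cleaning via the weight-$2$ $X$-checks---is the correct mechanism and matches the approach of the cited references.
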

\begin{lemma}[Lemma 2 \cite{Williamson2024}]
    The distance $d'$ of the deformed code obtained by the gauging logical measurement scheme satisfies $d' \geq d$ if the gadget graph has Cheeger constant at least 1.   
\end{lemma}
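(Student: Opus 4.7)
The plan is to recognize the gauging logical measurement scheme of Williamson et al.\ as a specific instance of the homomorphic-surgery framework developed in Section \ref{sec:surgery}, so that the claim reduces to Lemmas \ref{lemma:dx-distance} and \ref{lemma:dz-distance}. Concretely, the gadget graph $G$ induces a $1$-dimensional ancilla complex $D_\bullet$ with $D_1 = \mathbb{F}_2^{E(G)}$, $D_0 = \mathbb{F}_2^{V(G)}$, and $\partial_1^D$ the standard edge-to-vertex boundary map, while the attachment data of the gadget supplies a chain map $\gamma_\bullet : D_\bullet \to C_\bullet$ whose image in $C_1$ is a representative of the logical being gauged and whose image in $C_0$ is arranged so that the commutation relations of the deformed code hold. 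The deformed code of the scheme then coincides with $\tot E$.

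First I would translate the hypothesis ``Cheeger constant at least $1$'' into the $\epsilon$-systolic expansion language of Definition \ref{def:systolic_expansion} with $\epsilon = 1$. For any $f \in D_1 \setminus Z_1(D)$, the vector $\partial_1^D f$ is supported on the odd-degree vertices of the edge set $f$. After passing to a representative $f + z$ with $z \in Z_1(D)$ minimizing $\|f+z\|$, the graph cut defined by $f+z$ is balanced on each of its connected components, so the Cheeger inequality applied componentwise gives $\|\partial_1^D f\| \geq \|f+z\|$. This is exactly the systolic expansion needed.

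With that translation in hand, the distance bound for the deformed code splits into the two paper lemmas. For $X$-logicals I would replay the argument of Lemma \ref{lemma:dx-distance}: any non-trivial $X$-logical of $\tot E$ can be written as $w + (\gamma_1 + \partial_1^D) u$ where by Lemma \ref{lem:homology-basis} the vector $w$ represents a non-trivial $X$-logical of $C_\bullet$, and combining the $1$-systolic expansion of $D_1$ with the $O(1)$-boundedness of $\partial_1^D$ and $\gamma_1$ (automatic for a graph-based ancilla with bounded-degree attachment) yields $\|L\| \geq d_X(C)$. For $Z$-logicals, Lemma \ref{lemma:dz-distance} already supplies $d_Z(\tot E) \geq d_Z(C)$ without invoking expansion at all, since any non-trivial $Z$-logical of $\tot E$ projects to a non-trivial $Z$-logical of $C_\bullet$. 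Together these give $d' \geq d$.

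The main obstacle is the first step, the precise reconciliation of Cheeger's constant, which is intrinsically a statement about balanced vertex cuts ($|S| \leq |V|/2$), with the algebraic systolic inequality on $D_1$, which must hold for arbitrary edge subsets modulo cycles. The resolution is to exploit the freedom of adding a cycle $z \in Z_1(D)$ to $f$ in order to reduce to a connected, balanced situation on each component, where Cheeger applies directly; a secondary subtlety is that the constants in the $\omega$-boundedness of $\partial_1^D$ and $\gamma_1$ must line up with the ``$1$'' in ``Cheeger $\geq 1$'', which in the Williamson construction is arranged by the specific normalization of the graph attachment. Once this reduction is made precise, the remainder of the argument is identical to the algebra already carried out in this paper.
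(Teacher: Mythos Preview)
The paper does not prove this lemma; it is quoted as Lemma~2 of \cite{Williamson2024} and used as a black box in Section~\ref{sec:constructing}, so there is no in-paper proof to compare against.

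Your high-level plan---recovering the statement as a special case of Lemmas~\ref{lemma:dx-distance} and~\ref{lemma:dz-distance}---is sound and is exactly how the paper relates the two results (see the last sentence of Section~\ref{subsec:generalized-surgery}). The concrete gap is that you have the ancilla complex oriented backwards. In the gauging scheme the \emph{vertices} of the gadget graph $G$ index the new $X$-checks, each attached via $\gamma_1$ to a single qubit in the logical support, while the \emph{edges} of $G$ are the ancilla qubits. Thus $D_1=\mathbb{F}_2^{V(G)}$, $D_0=\mathbb{F}_2^{E(G)}$, and $\partial_1^D$ is the vertex-to-incident-edges coboundary, not the edge-to-vertex boundary you wrote. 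With this orientation, for $S\subset V$ one has $\|\partial_1^D S\|=|E(S,\bar S)|$ and, for connected $G$, $Z_1(D)=\{0,V\}$ so that $\min_{z\in Z_1}\|S+z\|=\min(|S|,|\bar S|)$; hence the systolic expansion of $D_1$ is \emph{literally} the Cheeger constant of $G$, and your ``main obstacle'' evaporates. Moreover $\gamma_1$ is now $1$-bounded (one qubit per vertex), so Cheeger constant at least $1$ gives $\epsilon\geq\omega=1$ in Lemma~\ref{lemma:dx-distance} directly, and your ``secondary subtlety'' about aligning constants also disappears. With the identification corrected, the remainder of your argument via Lemmas~\ref{lem:homology-basis},~\ref{lemma:dx-distance} and~\ref{lemma:dz-distance} goes through cleanly.
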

Let $\mathcal{C}$ be a $\llbracket n,k,d \rrbracket$ CSS LDPC code and $\overline{X}_j$ the logical operator to measure.
First, we construct a ``deformed'' code that has $k=1$ logical qubits by attaching measurement gadgets to the $k-1$ logical representatives distinct from $\overline{X}_j$. 
If any logical supports overlap, we construct a deformed code using a brute-force branching of the $Z$ logicals $\overline{Z}_i, i \neq j$ of $\mathcal{C}$ to obtain representatives with disjoint supports. 
Then we attach a measurement gadget from \cite{Williamson2024,Ide2025} to every logical representative $\overline{Z}'_i, i \neq j$ (i.e., either on a leaf of the brute-force branching for initially overlapping logicals, or at the logical support directly if it is disjoint and does not require branching) to obtain the deformed code $\mathcal{C}'$.

By the distance preserving properties of the applied gadgets and in particular the above lemmas, the resulting deformed code has distance $d(\mathcal{D}) \geq d(\mathcal{C})$.
Since each attached measurement gadget has code dimension $1$ \cite{Williamson2024,Ide2025}, the deformed code has dimension $k' = k - (k-1) = 1$. 
Moreover, the deformed code is LDPC since both brute-force branching and the gauging schemes preserve the LDPC property.
Finally, note that we use brute-force branching and the gauging measurement scheme to construct a deformed code, as opposed to conducting the protocol on the input code.

The construction explained above is depicted in the following diagram, where $C_\bullet$ is the chain complex corresponding to the initial code, $C'_\bullet$ the deformed code obtained from brute-force branching, and $F_\bullet$ the complex of the gauging gadget
\begin{equation}\label{eq:parallel-anc-complex-diag}
    \begin{tikzcd}
                                                    & F_1 \arrow[r, "\delta^F"] \arrow[d, "f_1"]        &  F_2 \arrow[d, "f_2"] \arrow[r] & F_3 \\
     \arrow[d, "\gamma_2"] C'_0 \arrow[r,"\delta'_0"]\arrow[d]  & \arrow[d, "\gamma_1"] C'_1 \arrow[r,"\delta'_1"]  & \arrow[d, "\gamma_0"] C'_2 \\
    C_0 \arrow[r,"\delta_0"] & C_1 \arrow[r,"\delta_1"] & C_2
    \end{tikzcd} 
\end{equation}
Hence, the ancilla complex $D_\bullet$ has vector spaces
\begin{align}
   C'_2 \oplus F_3 \to C'_1 \oplus F_2 \to  C'_0 \oplus F_1\,
\end{align}
with boundary maps
\begin{align}
    \partial_2 = \begin{pmatrix}
        \delta'_0   & 0\\
        f_1         & \delta_1^F
    \end{pmatrix}^T, \quad
    \partial_1 = \begin{pmatrix}
        \delta'_1   & 0\\
        f_2         & \delta_2^F
    \end{pmatrix}^T \, ,
\end{align}
obtained by forming the cone complex of the cochain complex.

The chain map from the deformed code to the initial code $\gamma_\bullet \colon D_\bullet \to C_\bullet$ maps elements from $C'$ via the inclusion map to the corresponding elements in $C$, and maps elements from $F_\bullet$ to zero.

Note that it is important that we construct the deformed code by attaching measurement gadgets to the other $k-1$ many $Z$ logicals if we want to measure $\overline{X}_j$, since otherwise we can in general not define $\gamma_\bullet$ in a meaningful manner, which is why the complex in Eq.~(\ref{eq:parallel-anc-complex-diag}) arises from the cochain complex.

As a consequence, we can use $D_\bullet$ as the ancilla complex and apply expansion boosting from Section.~\ref{sec:boost}.
Then Algorithm~\ref{alg:single-shot-surgery} gives use a procedure for measuring any $X$ or $Z$ operator of $\mathcal{C}$.

\section{Case study} \label{sec:case_study}

Abelian multi-cycle codes are a higher dimensional generalization of generalized bicycle codes that have been found to have competitive performance under circuit level noise~\cite{Lin2025, Aasen2025}. 
We focus on the $4$-dimensional case as these are single-shot decodable in both the $X$ and $Z$ sectors~\cite{Lin2025}. 

Let $l \in \mathbb{N}^+$ and $x_l$ be the $l \times l$ shift matrix. 
The $4$-dimensional multi-cycle code is defined using using four circulant matrices $A, B, C, D$, which are each defined by polynomials over $x_l$. 
Then the check matrices $H_X, H_Z$ and meta-check matrices $M_X, M_Z$ are given by
\begin{align}
    M_X &= \begin{pmatrix}
        A^T & B^T & C^T & D^T \\
    \end{pmatrix} \\
    H_X &= \begin{pmatrix}
        B^T & C^T & 0 & D^T & 0 & 0 \\
        A^T & 0 & C^T & 0 & D^T & 0 \\
        0 & A^T & B^T & 0 & 0 & D^T \\
        0 & 0 & 0 & A^T & B^T & C^T
    \end{pmatrix} \\
    H_Z &= \begin{pmatrix}
        C & B & A & 0 & 0 & 0 \\
        D & 0 & 0 & B & A & 0 \\
        0 & D & 0 & C & 0 & A \\
        0 & 0 & D & 0 & C & B
    \end{pmatrix} \\
    M_Z &= \begin{pmatrix}
        D & C & B & A
    \end{pmatrix}\, .
\end{align}
\paragraph{Gadget construction.}
We focus on the $\llbracket 42, 6, 4\rrbracket$ code from Ref.~\cite{Lin2025} given by $l=7$, and the matrices $A = 1 + x_l$, $B = 1 + x_l^2$, $C = 1 + x_l^3$, $D = 1 + x_l^4$. 
Following the notation of Ref.~\cite{Bravyi2024}, we can describe $X$-type logical operators using a polynomial for each column block of physical qubits. We consider the $X$-type logical operator which has the following disjoint representations
\begin{align}
    \overline{X}_\alpha = X\left( \alpha(1+x^4), 0, \alpha x, 0, \alpha x^4, 0 \right) &&\text{for } \alpha \in \left\{1, x, x^2 \right\}.
\end{align}
Then we can construct a sub-complex corresponding to a quantum code that includes these logical operators as follows. 
The $X$-type checks in the subcomplex are given by the minimal set of generators that clean $\overline{X}_1$ to $\overline{X}_x$ and $\overline{X}_x$ to $\overline{X}_{x^2}$. 
These are given by
\begin{align}
    &X\left(\alpha(1+x), 0, \alpha(x+x^5), 0, \alpha(x+x^4)\right) \\
    &X\left(\alpha(x^4+x^5), 0, \alpha(x^2+x^5), 0, \alpha(x+x^5) \right) &&\text{for } \alpha \in \left\{1, x\right\}
\end{align}
The physical qubits in the sub-complex are the set of qubits in the support of the above $X$-type checks. 
These are the qubits with indices
\begin{equation}
    \left\{ 0, 1, 2, 4, 5, 6, 15, 16, 17, 19, 20, 29, 30, 32, 33, 34 \right\}.
\end{equation}
The $Z$-checks in the sub-complex are the set of checks whose support intersect non-trivially with these physical qubits. These are given by

\begin{align}
    Z\left(\alpha(1+x^3), \alpha(1+x^2), \alpha(1+x), 0, 0, 0 \right) \\
    Z\left(\alpha(1+x^4), 0, 0, \alpha(1+x^2), \alpha(1+x), 0 \right) &&\text{for } \alpha \in \left\{1, x, x^2, x^3, x^4, x^5, x^6 \right\}\\
    Z\left(0, 0, \beta(x+x^5), 0, \beta(x+x^4), \beta(x+x^2) \right) &&\text{for } \beta \in \left\{1, x, x^2, x^3, x^4, x^5 \right\}.
\end{align}

This sub-complex then defines the code $D$ that we use to perform the logical measurement. This code encodes one logical qubit and we verify numerically that $d_Z(D) = 3$.
We define the chain map from $D$ to $C$ using the inclusion maps from $i$-chains in $D$ to $i$-chains in $C$. 
We attach this gadget to the base code using only one layer and numerically confirm that the distance of the merged code is $4$, hence the merged code has parameters $\llbracket 62, 5, 4\rrbracket$ and the surgery procedure has a fault distance of $3$.

\paragraph{Numerical study.}
We have conducted decoding simulations of measuring a given logical operator with our scheme and the ancilla complex $D$ constructed from the considered multi-cycle code instance to provide preliminary numerical underpinnings for the proposed scheme.

To demonstrate the logical error suppression and compare our `fast' scheme to the standard logical measurement approach~\cite{Ide2025,Williamson2024}, we construct logical measurement gadgets for both schemes.

For the standard scheme we construct the ancilla complex from the dual Tanner graph of the induced graph on the logical operator to measure and confirm numerically that the gadget is distance preserving. 
The merged code has parameters $\llbracket 48,5,4 \rrbracket$.
For this gadget we simulate $3$ rounds (corresponding to the fault distance) as well as a single round, to analyze the logical error suppression of both. 
For the fast scheme we simulate only a single round. 
We use a phenomenological noise model and decode the $X$ sector only using BP+OSD implemented in the LDPC library~\cite{Roffe_LDPC_Python_tools_2022}.
We use single stage decoding~\cite{Higgott2023}, wherein the meta-check matrices are explicitly inserted into the parity check matrix sent to the decoder.
The logical observables correspond to the unmeasured logical $X$ operators and the product of all checks in the ancilla system, which gives the logical measurement outcome.

The results depicted in Fig.~\ref{fig:numerics} indicate that our scheme with a single round has similar logical error rate as the standard scheme with three rounds, whereas the standard scheme with a single round does not lead to any logical error suppression.
\begin{figure}[t]
    \centering
    \includegraphics[width=0.75\linewidth]{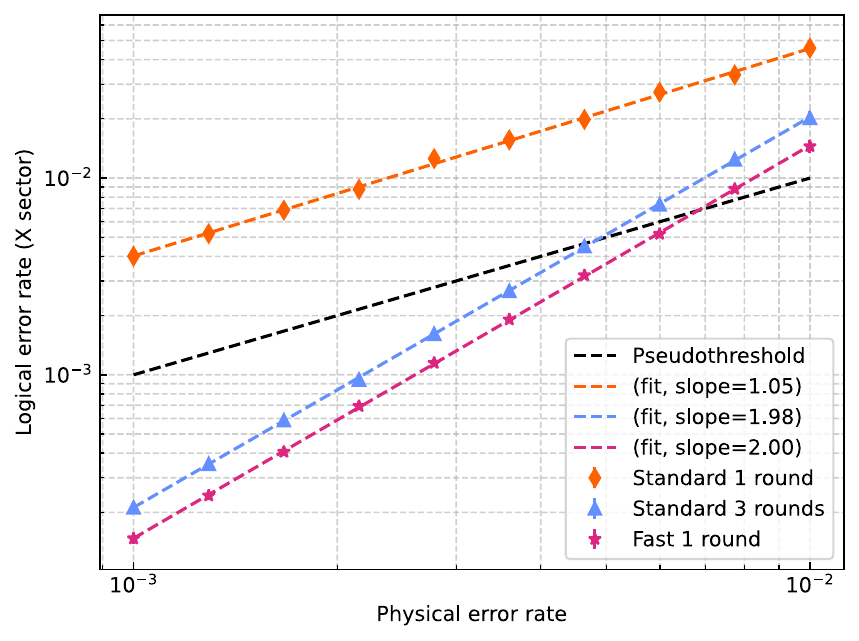}
    \caption{Surgery simulations for a $\llbracket 42,6,4 \rrbracket$ multi-cycle code instance under phenomenological noise for $X$-checks and $X$ logical observables. 
    The line with diamond markers corresponds to the standard scheme with one round and the line with triangles pointing up corresponds to the standard scheme with three rounds. 
    The line with star markers corresponds to the fast surgery scheme with 1 round and the unmarked line illustrates the pseudothreshold ($x=y$).
    Error bars represent $99\%$ binomial confidence intervals and the dashed lines are loglog fits.}
    \label{fig:numerics}
\end{figure}
The numerical data used in this plot is available at~\cite{iceberg_simulation_2024}.

\section{Conclusion}

In this paper we have introduced a framework for performing logical CSS measurements on quantum LDPC codes using fewer rounds of measurements than in previous generalized surgery schemes.
We achieved this by utilizing an auxiliary $2$-dimensional complex and forming the merged code by taking the total complex of the resulting chain map.
The distance of the auxiliary complex ensures the fault-tolerance of the scheme even with a constant number of rounds of measurements, and the systolic expansion ensures the merged code is distance preserving.
We then studied an application of this scheme on Abelian multi-cycle codes, demonstrating how the scheme can be used to practically reduce the time overhead of generalized surgery.

There are several directions to extend upon this work in the future.
The scheme we present in this paper applies to CSS codes and allows us to measure $X$-type and $Z$-type logical Pauli operators.
It is possible to describe a general stabilizer code by using a symplectic chain complex, where the modules are now symplectic vector spaces and the boundary maps must be symplectic.
Can we then describe our surgery procedure in terms of symplectic chain complexes?
This would allow us to measure non-CSS logical operators, such as $Y$-type Pauli operators, which are necessary for implementing the full logical Clifford group without using an external source of logical $\ket{i}$ states.

We give a case study of our scheme on a small example of a multi-cycle code. It would be interesting in the future to explore larger codes, and to tailor the measurement schemes for these codes. This will be necessary to utilize this scheme for large scale fault-tolerant applications. Furthermore, we should design syndrome extraction circuits for fast surgery schemes and benchmark their performance under circuit level noise to better assess their performance on hardware.

\section{Acknowledgements}
We thank Paul Webster, Sam Smith and all our colleagues at Iceberg Quantum for helpful discussions.

\bibliographystyle{apsrev4-1}
\bibliography{refs}

\appendix

\section{Connection to coning}

After some probing of the homological algebra literature \cite{weibel1994introduction}, we realised that the method detailed in the main text is closely related to \emph{coning} -- which was pursued in \cite{Ide2025}.
Let $(B_\bullet, \partial^B), (C_\bullet, \partial^C)$ be chain complexes, and $f_\bullet: B_\bullet \rightarrow C_\bullet$ a chain map.
\begin{equation} 
\begin{tikzcd}[cells={nodes={minimum height=2em}}]
    B_{i+1} \arrow[r,"\partial^B_{i+1}"] \arrow[d, "f_{i+1}"] &  B_i \arrow[r,"\partial^B_{i}"] \arrow[d,"f_i"] &  B_{i-1}\arrow[d,"f_{i-1}"] \\
    C_{i+1} \arrow[r,"\partial^C_{i+1}"] & C_i \arrow[r, "\partial^C_{i}"] & C_{i-1}
\end{tikzcd}
\end{equation}
The reader will note this complex is precisely a double complex, which we relabel to make this observation explicit.
\begin{equation} 
E_{i,j} = \begin{tikzcd}[cells={nodes={minimum height=2em}}]
    E_{i+1,1} \arrow[r,"\partial^B_{i+1}"] \arrow[d, "f_{i+1}"] &  E_{i,1} \arrow[r,"\partial^B_{i}"] \arrow[d,"f_i"] &  E_{i-1,1}\arrow[d,"f_{i-1}"] \\
    E_{i+1,0}  \arrow[r,"\partial^C_{i+1}"] & E_{i,0} \arrow[r, "\partial^C_{i}"] & E_{i-1,0}
\end{tikzcd}
\end{equation}
Then the cone complex $\text{Cone}(f)_\bullet$ \cite{weibel1994introduction}, corresponds exactly to the total complex $\tot (E)_\bullet$:
\[
    \text{Cone}(f)_\bullet = \tot (E)_\bullet
\]
It is well known how to compute the homology of cone complexes.
In particular, we can refer to \cite{weibel1994introduction} 1.5.2, where it is shown that the cone complex induces the following long exact sequence on homology groups:
\begin{equation}
    H_{n+1}(\text{Cone}(f)) \xrightarrow{\pi_\star} H_n(B) \xrightarrow{f_\star} H_n(C) \xrightarrow{\iota_\star} H_{n}(\text{Cone}(f))
\end{equation}
Where $\pi(b, c) = (b,0)$, and $\iota(c) = (c,0)$, and $\pi_\star, \iota_\star, f_\star$ is simply the restriction of $\pi, \iota, f$ to the homology groups. It is always possible to exact a short exact sequence from a long one, which we will leverage here. There are many way to go about this, since, for example $\im \pi = \ker f$, however we want to be careful to not to explicitly refer to $H_{i}(\text{Cone}(f))$, as it is the quantity we are trying to compute. Thus, the only choice we are left with is:
\begin{equation}
    0 \rightarrow \coker f_\star \xrightarrow{\iota_\star} H_{i}(\text{Cone}(f)) \xrightarrow{\pi_\star} \ker f_\star \rightarrow 0
\end{equation}
Where $\coker f_\star = H_i(C) / \im f_\star = H_i(C)/f_\star (H_i(B))$, while we have:
\[
\ker f_\star = \{ b \in H_{i-1}(B) : \exists c_b \in C_i, f(b) = \partial^C_i c_b\}
\] 
We should note, for the sake of completeness, that $\pi^{-1}$ is indeed well defined on as it sends $b$ to $(b,c_b)$. Using the splitting lemma, we get an explicit expression for $H_{i}(\text{Cone}(f))$:
\begin{align}
        H_{i}(\text{Cone}(f)) &= \iota_\star(\coker f_\star) \bigoplus \pi_\star^{-1} (\ker f_\star) \\ 
        &= \label{eq:coning-basis}\iota_\star(H_i(C)/f_\star (H_i(B))) \bigoplus \pi_\star^{-1} (\ker f_\star)
\end{align}
\paragraph{Remark:} Equation \ref{eq:coning-basis} provides an alternative proof of Lemmata \ref{lemma:dimension}, \ref{lem:homology-basis} and \ref{lemma:dz-distance} -- note in our case $\ker f_\star = 0$ because $H_0(B) = 0$ which greatly simplifies calculations.
We present both the original exposition of the fast surgery, and the more natural homological one, to appeal to the broadest audience.

\end{document}